\def\max{{\rm max}}
\def\min{{\rm min}}
\newtheorem{observation}{Observation}
\def\max{{\rm max}}
\def\patrol{\textsc{PUF}}
\newcommand{\reals}{\mathbb{R}}
\newcommand{\ignore}[1]{}
\begin{document}

%\title{Patrolling a Path Connecting a Set of Points
%with Non-Uniform Frequencies of Visits}

\title{Patrolling a Path Connecting a Set of Points with Unbalanced Frequencies of Visits\thanks{This is the full version of the paper with the same title which
will appear in the proceedings of SOFSEM 2018,  44th International
Conference on Current Trends in Theory and Practice of Computer
Science, January 29 - February 2, 2018, Krems an der Donau, Austria.}}
%\title{Patrolling a Path-Connected Set of Points with Non-Uniform Visitation-Frequency Requirements}

\author{Huda Chuangpishit \inst{1} 
\and
Jurek Czyzowicz
\inst{2}\thanks{Research supported in part by NSERC.}
\and
Leszek~G\k asieniec \inst{3}\thanks{This work was in part supported by Networks Sciences and Technologies (NeST).} 
\and 
\\ Konstantinos Georgiou \inst{1}$^{\star\star}$ \and
Tomasz Jurdzi\'nski \inst{4}\thanks{Research supported by the Polish National
Science Centre grant DEC-2012/06/M/ST6/00459.} \and 
Evangelos Kranakis \inst{5}$^{\star\star}$}
\institute{{Department of Mathematics}, {Ryerson University}, {Toronto, Canada},
\email{hoda.chuang@gmail.com, konstantinos@ryerson.ca}
\and {D\'epartement d'informatique}, {Universit\'e du Qu\'ebec en Outaouais}, {Canada},
\email{Jurek.Czyzowicz@uqo.ca}
\and {Department of Computer Science}, {University of Liverpool}, {Liverpool, UK},
\email{L.A.Gasieniec@liverpool.ac.uk}
\and {Instytut Informatyki}, {Uniwersytet Wroc\l awski}, {Wroc\l aw, Poland},
\email{Tomasz.Jurdzinski@ii.uni.wroc.pl}
\and {School of Computer Science}, {Carleton University}, {Ottawa, Canada},
\email{evankranakis@gmail.com}
}

\maketitle

\begin{abstract}
Patrolling consists of scheduling perpetual movements of a collection of mobile robots, so that each point of the environment is regularly revisited by any robot in the collection. In previous research, it was assumed that all points of the environment needed to be revisited with the same minimal frequency.

In this paper we study efficient patrolling protocols for points located on a path, where each point may have a different constraint on frequency of visits. The problem of visiting such divergent points was recently posed  by G\k asieniec et al. in~\cite{GK+16}, where the authors study  protocols using a single robot patrolling a set of $n$ points located in nodes of a complete graph and in Euclidean spaces. 
%In the first scenario they proposed 2-approximation for an arbitrary distribution of frequencies, and more tight approximations for the cases with more balanced frequencies of visits. In the second scenario they proposed $O(\log n)$-approximation.  
  
The focus in this paper is on patrolling with two robots. We adopt a scenario in which all points to be patrolled are located on a line.  We provide several approximation algorithms concluding with the best
currently known $\sqrt 3$-approximation.

\end{abstract}

\section{Introduction}\label{sec:Intro}
In this paper we study efficient patrolling protocols by two robots for a collection of 
$n$ points distributed arbitrarily on a path or a segment of length~$1.$ 
Each point needs to be attended perpetually with {\sl known} but often distinct minimal 
frequency, i.e., some points need to be visited more often than others.

The problem was recently studied in~\cite{GK+16} 
%where the emphasis is on patrolling 
where a collection of $n$ points was monitored with use of a single mobile robot. The points to be patrolled in~\cite{GK+16} are located in nodes
of a complete graph with edges of uniform (unit) length, as well as in Euclidean spaces,
where the points are distributed arbitrarily. In their work the frequency constraints refer
to {\em urgency factors} $h_i$, meaning that during a unit of time the urgency of point
$p_i$ grows by an additive term $h_i,$ and the task is to design a schedule of perpetual
visits to nodes which minimizes the maximum ever observed urgency on all points.   
In complete graphs and for any distribution of frequencies (urgency factors) the authors of \cite{GK+16} proposed a 2-approximation algorithm based on a reduction to the pinwheel scheduling problem, see, e.g., \cite{chan_general_1992,chan_schedulers_????,holte_pinwheel:_1989,holte_pinwheel_1992,lin_pinwheel_????}. 
They also discuss more tight approximations for the cases with more balanced urgency
factors. In Euclidean spaces \cite{GK+16} proposes several lower bounds and concludes with an
$O(\log n)$-approximation for an arbitrary distribution of points and urgency factors.

%Note a related formulation of our problem, in which 
In our formulation, we assume that both robots have unit speed, and we try to minimize the relative violation of visitation-frequency requirements, i.e. the worst case time between two visitations over the required largest waiting time of each point.
Equivalently, one may think of the problem of finding the minimum possible speed $s$ that both robots should patrol with that induces no violation for the visitation-frequency requirements. 
In such setting, our {\em patrolling} result refers naturally to a competitive ratio, which is defined by the ratio of 
the speed the robots attain in our algorithm divided by the speed in the optimal solution. 

Specific to our model is the use of two robots, for which, as we show, one can achieve $\sqrt{3}$-approximation patrolling schedules. 
Notably, and maybe counter-intuitively, reducing the number of robots from two to one does not lead to constant approximation. 
An instructive example is when the central point has a very large visiting frequency (we can dedicate one robot to this point) 
comparing to the rest of the points on the line.  

In the previous research on boundary and fence patrolling (cf. \cite{czyzowicz_boundary_2011,CGK+15,KaKo15}) all points of the patrolled environment were supposed to be revisited with the same frequency.  However, assigning different importance to distinct portions of the monitored boundary seems natural and observable in practice. A particular variation of this problem was studied in \cite{collins_optimal_2013}, where the authors focus on 
monitoring {\em vital} (possibly disconnected) parts of a linear environment, while the remaining {\em neutral} portions of the boundary need not be attended at all. 

The problem of distinct attendance assigned to different portions of the environment, while of inherent combinatorial interest, is also observed in perpetual testing 
of virtual machines in cloud systems~\cite{AKG15}.
In such systems the frequency with which virtual machines are tested for undesirable 
symptoms may vary depending on the importance of dedicated cloud operational mechanisms.

The problem studied here is also a natural extension of several classical combinatorial
and algorithmic problems referring to {\em monitoring} and {\em mobility}.
This includes the {\em Art Gallery Problem}~\cite{ntafos_gallery_1986,orourke_1987} and its dynamic
variant called the {\em $k$-Watchmen Problem} \cite{urrutia_art_2000}.
In a more recent work on {\em fence patrolling} \cite{collins_optimal_2013,czyzowicz_boundary_2011,KaKo15} the efficiency of patrolling is measured by the {\em idleness} of the protocol, which is the  time where a point remains unvisited (maximized over all time moments and all points of the environment). 
% where each point is expected to be attended with the same frequency. 
In~\cite{CGK+15} one can find a study on monitoring of linear environments by robots prone to faults.
In~\cite{czyzowicz_boundary_2011,KaKo15} the authors assume robots have
distinct maximum speeds which makes the design of patrolling protocols more complex,
in which case the use of some robots becomes obsolete. 

In a very recent work~\cite{LS17} Liang and Shen consider a line of points attributed with
uniform urgency factors. For robots with uniform speeds, they give a polynomial-time
optimal solution, and for robots with constant number of speeds they present 
a 2-approximation algorithm. For an arbitrary number of velocities they design a 
4-approximation algorithm, which can be extended to 
a $2\alpha$-approximation algorithm family scheme, where integer $\alpha>1$ 
is the tradeoff factor to balance the time complexity and approximation ratio. 

\section{Problem Statement \& Definitions}
\label{sec: definitions}

An instance of the \textit{Path Patrolling Problem of Points with Unbalanced Frequencies (\patrol)}  consists of points $S=\{y_i\}_{i=1,\ldots,n}$ in the unit interval, where $0=y_1<y_2<\ldots<y_n=1$. Each point $y_i$ is associated with its \textit{idleness time} $I(y_i) \in \reals_+$, a positive real number which is also referred to as \textit{visitation frequency requirement}.

A perpetual movement schedule of two robots $r_1,r_2$ of speed 1 will be referred to as a \textit{patrolling schedule} (robots may change movement direction instantaneously, and at no cost). Given a patrolling schedule $\mathcal A$, we define the \textit{waiting time} $w_{\mathcal A}(y_i)$ of each point $y_i$ as the supremum of the time difference between any two subsequent visitations by any of $r_1,r_2$. When the patrolling schedule is clear from the context, we will drop the subscript in $w_{\mathcal A}$. 

A patrolling schedule $\mathcal A$ is called \textit{feasible} if for all $i$, $w_{\mathcal A}(y_i) \leq I(y_i)$. Schedule $\mathcal A$ is called \textit{$c$-feasible}, or \textit{$c$-approximation}, if $w_{\mathcal A}(y_i)/I(y_i)\leq c$, for each $i=1,\ldots,n$. Thus a feasible patrolling schedule is also $1$-approximation, or 1-feasible. 

An instance of \patrol\ that admits a feasible patrolling schedule will be called \textit{feasible}. In this paper we are concerned with the combinatorial optimization problem of minimizing the worst (normalized) violation of the idleness times for feasible instances, i.e., we are concerned with finding good approximation patrolling schedules, in which robots' trajectories can be determined efficiently in the size of the given input. We will call such patrolling schedules \textit{efficient}. 

The problem considered here is a close relative of {\em Pinwheel scheduling}~\cite{holte_pinwheel:_1989} 
modeled by points with non-uniform deadlines (visitation-frequencies) spanned by a complete network with edges of uniform length. 
The complexity of Pinwheel scheduling depends on its representation.
In particular we know that in the standard multi-set representation the problem is in NP, however, we still don't known whether it is NP-hard.
One can try to get closer to this answer either by studying particular instances of the problem which can be decided~\cite{holte_pinwheel_1992} 
or instead by seeking approximate solutions~\cite{GK+16}. In this paper we adopted the latter.    

%\todo{Need to motivate why this is interesting. Can we provide an example where a feasible schedule exhibits a complicated periodic schedule? Is the problem NP-hard?}

%Critical to the introduction and analyses of our patrolling schedules are the following natural concepts. 
We use the following concepts in the analysis of of our patrolling schedules.
We associate each point $y_i$ with its \textit{range} defined as the closed intervals
$ R(y_i) = \left[\max\left\{0, y_i-\frac{I(y_i)}{2}\right\}, \min\left\{1, y_i + \frac{I_(y_i)}{2}\right\}\right].$
Intuitively, $R(y_i)$ is the ball around $y_i$ within which a robot can be moving introducing no violation to the visitation frequency requirement of $y_i$. We also group points $y_i$ with respect to whether the extreme points fall within their range, i.e., we introduce:
\begin{align*}
S_{00}&:= \left\{ y_i \in S:~ 0,1\not \in R(y_i)\right\}, 
S_{01}:= \left\{ y_i \in S:~ 0\not \in R(y_i) \ni 1 \right\}, \\
S_{10}&:= \left\{ y_i \in S:~ 0 \in R(y_i) \not \ni 1 \right\}, 
S_{11}:= \left\{ y_i \in S:~ 0,1 \in R(y_i)\right\}.
\end{align*}

\section{Summary of Results \& Paper Organization}

Our main contribution pertains to efficient patrolling schedules (algorithms) of feasible \patrol\ instances. In particular, the patrolling schedules we propose are highly efficient and simple, meaning that robots' trajectories can be determined by a few critical turning points, which can be computed in linear time in the number of points of the \patrol\ instance. In order to do so, we provide in Section~\ref{sec: on feasible instances} some useful properties that all feasible \patrol\ instances exhibit, and in particular a characterization of instances with ``no problematic points''. For the latter instances, we also provide optimal feasible schedules (Theorem~\ref{thm:no-00-point}). 
Then we turn our attention to arbitrary feasible \patrol\ instances. As  a warm-up, we present in Section~\ref{sec: 4-feasible} a simple efficient $4$-approximation patrolling schedule that does not require coordination between robots. Section~\ref{sec: sqrt3-feasible} is devoted to the introduction of an elaborate and efficient $\sqrt{3}$-approximation patrolling schedule. The execution of the patrolling schedule requires robots to remember at most two special turning points (that can be found efficiently), and, in some cases, their coordination so that they never come closer than a predetermined critical distance. Its performance analysis is based on further properties of feasible \patrol\ instances that are presented in Section~\ref{sec:observations}. In particular, the $\sqrt{3}$-feasible patrolling schedule is the combination of Algorithms~\ref{alg:partition-patrolling} and~\ref{alg:sqrt(3)}, presented in Sections~\ref{sec: 1+2a-approx} and~\ref{sec: (2+a)/(1+a)-approx} respectively, each of them performing well for a different spectrum of a special structural parameter of the given instance that we call expansion. 
Finally in Appendices~\ref{sec: tight Algo1},~\ref{sec: tight Algo2} 
we also show that the analyses we provide for all our proposed algorithms are actually tight.

\section{Characterization of (Some) Feasible \patrol\ Instances}
\label{sec: on feasible instances}
In this section we characterize feasible instances of \patrol\ for which at least one of the extreme points falls within the range of each point. 
\begin{theorem}
\label{thm:no-00-point}
An instance of \patrol\ with $S_{00}=\emptyset$ is feasible  if and only if the following conditions are satisfied:
\begin{enumerate}[(1)]
\item $S_{10}\subset \bigcap_{x\in S_{10}} R(x)= X_{10}$, and $0\in X_{10}$.
\item $S_{01}\subset \bigcap_{x\in S_{01}} R(x)= X_{01}$, and $1\in X_{01}$.
\item $S\subset [\bigcap_{x\in S_{10}} R(x)] \cup \bigcap_{x\in S_{01}} R(x)] = X_{10}\cup X_{01}$
\end{enumerate}
Moreover, if conditions (1)-(3) are satisfied, then there exists an efficient 1-approximation partition-based patrolling schedule, i.e. a schedule in which every $y_i$ is visited only by one robot. 
\end{theorem}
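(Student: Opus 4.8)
The plan is to treat the two directions separately, building both on the elementary fact that a single unit-speed robot bouncing back and forth on an interval $[\alpha,\beta]$ services a point $x\in[\alpha,\beta]$ with waiting time exactly $\max\{2(x-\alpha),\,2(\beta-x)\}$; hence such a robot satisfies $x$ if and only if $[\alpha,\beta]\subseteq R(x)$. Writing $\rho(x):=I(x)/2$, a one-line computation from the definitions shows that every point of $S_{10}$ lies in $[0,\tfrac12)$ (since $0\in R(x)$ forces $\rho(x)\ge x$ while $1\notin R(x)$ forces $\rho(x)<1-x$) and, symmetrically, every point of $S_{01}$ lies in $(\tfrac12,1]$; the ``easy halves'' $0\in X_{10}$ and $1\in X_{01}$ are immediate because each $R(x)$ with $x\in S_{10}$ already contains $0$. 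This separation of the left-anchored and right-anchored demands is what makes a partition possible, and it yields $X_{10}=[0,r_{10}]$ and $X_{01}=[l_{01},1]$ with $r_{10}=\min_{x\in S_{10}}(x+\rho(x))$ and $l_{01}=\max_{x\in S_{01}}(x-\rho(x))$.

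For sufficiency together with the constructive ``moreover'' claim, I would assign $S_{10}$ to a left robot and $S_{01}$ to a right robot, each $S_{11}$ point (whose range is all of $[0,1]$) going to whichever side contains it. Condition~(1) lets the left robot bounce on $[0,r_{10}]$ and satisfy all of $S_{10}$, condition~(2) does the same for the right robot on $[l_{01},1]$, and condition~(3) guarantees no point is stranded. If $l_{01}\le r_{10}$ the intervals overlap, and I would instead pick a single split point $m\in[\max\{\max S_{10},l_{01}\},\,\min\{\min S_{01},r_{10}\}]$, which is nonempty precisely by the separation above together with conditions (1),(2) and the case hypothesis; the robots then bounce on $[0,m]$ and $[m,1]$. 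If $l_{01}>r_{10}$ there is a genuine gap $(r_{10},l_{01})$ which, by condition~(3), contains no point of $S$, so the disjoint patrols $[0,r_{10}]$ and $[l_{01},1]$ already form the partition. Either way each point is visited by exactly one robot, the schedule is $1$-feasible, and the at most four turning points are computable in linear time.

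The necessity direction is the heart of the matter, and the main obstacle is that with two robots one may not assume that any single robot ``owns'' a point: a point could be serviced alternately by both. I would circumvent this with a Pinning Lemma argued at a single time instant rather than over an interval. Fix the point $a\in S_{10}$ attaining $r_{10}=a+\rho(a)$; I claim that in any feasible schedule at least one robot lies in $[0,r_{10}]$ at all times. Indeed, if at some time $t$ both robots were strictly to the right of $r_{10}=a+\rho(a)$, then both are at distance $>\rho(a)$ from $a$, so the most recent visit to $a$ occurred before $t-\rho(a)$ and the next cannot occur before $t+\rho(a)$, forcing a gap exceeding $2\rho(a)=I(a)$ and contradicting feasibility. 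The symmetric statement pins a robot in $[l_{01},1]$ using the point of $S_{01}$ attaining $l_{01}$.

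Conditions (1)--(3) then follow from pinning. For~(3), a violating point $y$ would satisfy $r_{10}<y<l_{01}$; the two now-disjoint pinnings force one robot to remain in $[0,r_{10}]$ and the other in $[l_{01},1]$ for all time, since by continuity they can never exchange sides without momentarily vacating one interval, so $y$ is never visited, a contradiction. For~(1), suppose some $b\in S_{10}$ had $b>r_{10}$; the left pinning dedicates a robot to $[0,r_{10}]$, while the endpoint $y_n=1$ must be visited infinitely often. Considering an instant $t^\ast$ at which a robot sits at $1$, a short position-tracking argument (splitting on whether $b$ is next reached by the robot coming from $1$ or by the one freed from $[0,r_{10}]$) shows that the visits to $b$ straddling $t^\ast$ are at least $2(1-b)$ apart; since $b\in S_{10}$ gives $1-b>\rho(b)$, this exceeds $I(b)$, a contradiction. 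Condition~(2) is symmetric, using $y_1=0$. I expect the delicate steps to be the precise case analysis in this last argument and the continuity claim that two robots cannot satisfy both pinnings while swapping sides.
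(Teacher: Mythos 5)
Your proof is correct. The necessity direction follows essentially the same route as the paper: your ``Pinning Lemma'' is precisely the paper's Observation~\ref{obs:nec1} (for each $x\in S$ some robot must be in $R(x)$ at all times), applied to the minimizer $a$ so that a robot is pinned in $X_{10}=R(a)=[0,r_{10}]$. The only difference is that where you use continuity and position-tracking around an instant $t^\ast$ with a robot at $1$, the paper instead invokes its Observation~\ref{obs:order} (WLOG $r_1$ always stays left of $r_2$, by swapping roles at meetings), which makes the leftmost robot permanently confined to $X_{10}$ and lets the contradiction $2(1-b)>I(b)$ fall out with no case analysis. Your tracking argument does go through, but the delicate case you flagged is real: if both visits of $b$ straddling $t^\ast$ are made by the pinned robot $R'$, naive travel bounds give only $2(b-r_{10})$, which can be arbitrarily small; you must additionally argue that throughout the interval when the free robot is strictly to the right of $r_{10}$ (an interval of length at least $1-r_{10}$ on each side of $t^\ast$), the pinning forces $R'$ to remain in $[0,r_{10}]$ and hence away from $b$, restoring a gap of at least $2(1-r_{10})>2(1-b)>I(b)$. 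The ordering convention is exactly what the paper buys to avoid this case.

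On sufficiency your argument is genuinely different from, and in fact more careful than, the paper's. The paper simply asserts that the partition in which $r_1$ searches $X_{10}\setminus X_{01}$ and $r_2$ searches $X_{01}$ is feasible; read literally this fails. For example, take $S=\{0,\,0.2,\,0.6,\,1\}$ with $I(0)=0.9$, $I(0.2)=0.5$, $I(0.6)=1.0$, $I(1)=1.8$: then $S_{10}=\{0,0.2\}$, $S_{01}=\{0.6,1\}$, $X_{10}=[0,0.45]$, $X_{01}=[0.1,1]$, conditions (1)--(3) hold and the instance is feasible (zigzags on $[0,0.45]$ and $[0.45,1]$ work), yet under the paper's partition the point $0.2\in S_{10}$ lies in $r_2$'s interval $X_{01}$ and receives waiting time $2\max\{0.1,\,0.8\}=1.6>0.5$. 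Your construction is exactly what is needed to repair this: the separation observation $S_{10}\subset[0,\tfrac12)$, $S_{01}\subset(\tfrac12,1]$, together with the split point $m\in\bigl[\max\{\max S_{10},\,l_{01}\},\,\min\{\min S_{01},\,r_{10}\}\bigr]$ in the overlapping case (nonempty precisely by separation plus conditions (1),(2)) and the two disjoint patrol intervals in the gap case, where condition (3) guarantees no stranded point. So for this direction your write-up should be regarded as a correction of the paper's proof rather than a paraphrase of it.
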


In order to prove Theorem~\ref{thm:no-00-point} we need few observations.
\begin{observation}
\label{obs:nec1}
Assume $\mathcal{A}$ is a feasible patrolling schedule. Then, for each $x\in S$ and each time window of length at least $\frac{I(x)}{2}$ during an execution of $\mathcal{A}$, at least one robot is in $R(x)$.
\end{observation}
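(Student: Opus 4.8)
The plan is to prove Observation~\ref{obs:nec1}, which states a necessary condition for feasibility: during any time window of length at least $\frac{I(x)}{2}$, at least one robot must lie in the range $R(x)$.

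\medskip

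The approach is a proof by contradiction built directly from the definitions of feasibility and of the range $R(x)$. First I would suppose, for contradiction, that there is a point $x\in S$ and a time window $[t_0,t_1]$ with $t_1-t_0\ge \frac{I(x)}{2}$ during which \emph{no} robot is ever in $R(x)$. The key geometric observation is that, since both robots move with unit speed and $R(x)$ is the interval of all positions at distance at most $\frac{I(x)}{2}$ from $x$ (intersected with $[0,1]$), a robot that is outside $R(x)$ at \emph{every} moment of $[t_0,t_1]$ must in particular be outside $R(x)$ at the instants immediately flanking this window. The crux is then a simple travel-time bound: to actually \emph{visit} $x$, a robot must reach position $x$, and from any point outside $R(x)$ the distance to $x$ strictly exceeds $\frac{I(x)}{2}$; combined with unit speed, this forces the time needed to reach $x$ from just before $t_0$, or to have left the neighborhood of $x$ by just after $t_1$, to be more than $\frac{I(x)}{2}$ on each side.

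\medskip

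Concretely, consider the last visit to $x$ at or before $t_0$ and the first visit to $x$ at or after $t_1$ (at least one of these gaps must be realized by the structure of the schedule, otherwise $x$ is visited during the window, contradicting the assumption). Let $v^-$ be the most recent time $x$ is visited with $v^-\le t_0$ and $v^+$ the next time $x$ is visited with $v^+\ge t_1$. Because no robot is in $R(x)$ throughout $[t_0,t_1]$, in particular no robot visits $x$ during this window, so $(v^-,v^+)\supseteq [t_0,t_1]$ contains no visit to $x$. The waiting time of $x$ is therefore at least $v^+-v^-$. I would then bound this gap from below: any robot performing the visit at $v^+$ was, at the last moment before $t_1$ when it belonged to $R(x)$ (which by assumption is at or before $t_0$), at distance at least $\frac{I(x)}{2}$ from $x$ exactly when it was on the boundary of $R(x)$, and by unit speed needs time at least $\frac{I(x)}{2}$ to travel from the boundary of $R(x)$ to $x$. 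Symmetrically the robot that left after $v^-$ needs time at least $\frac{I(x)}{2}$ to exit $R(x)$ before the window starts. Adding these together with the window length yields a waiting time strictly greater than $I(x)$, so $w_{\mathcal A}(x)>I(x)$, contradicting feasibility.

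\medskip

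The main obstacle I anticipate is careful bookkeeping of the boundary-case arithmetic, rather than any deep idea: I must handle endpoints correctly when $R(x)$ is truncated by the interval $[0,1]$ (so the range may be one-sided), and I must argue cleanly that having \emph{two} robots does not help---namely that the hypothesis ``no robot is in $R(x)$'' applies simultaneously to both, so whichever robot ultimately visits $x$ on either side of the window is subject to the same travel-time lower bound. The cleanest way to dispatch this is to phrase the argument purely in terms of positions: at time $t_0$ and at time $t_1$ every robot is outside $R(x)$, hence at distance $>\frac{I(x)}{2}$ from $x$, and the nearest visit to $x$ before $t_0$ and after $t_1$ are each at least $\frac{I(x)}{2}$ away in time, which together with the length of the window forces the waiting time to exceed $I(x)$ and contradicts the assumed feasibility of $\mathcal A$.
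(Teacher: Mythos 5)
Your proof is correct, and its engine is the same as the paper's: a robot strictly outside the closed interval $R(x)$ is at distance greater than $\frac{I(x)}{2}$ from $x$, so by unit speed it cannot have visited $x$ in the preceding $\frac{I(x)}{2}$ time units nor reach $x$ in the following $\frac{I(x)}{2}$, which forces a gap between consecutive visits of $x$ exceeding $I(x)$ and contradicts feasibility. The difference worth noting is where the contradiction is anchored. You negate the statement literally, assuming no robot touches $R(x)$ at \emph{any} moment of the whole window $[t_0,t_1]$, and then use both endpoints to bound the visit gap. The paper instead assumes absence of both robots from $R(x)$ at a \emph{single} instant $t$ at distance at least $\frac{I(x)}{2}$ from the start of the execution, and derives the contradiction from that one instant alone. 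Since the paper's contradiction hypothesis is weaker, its argument proves a stronger conclusion: at \emph{every} time $t\geq \frac{I(x)}{2}$ some robot lies in $R(x)$ --- not merely that every window of that length contains some moment with a robot in range. That pointwise form is what is actually invoked downstream (in the proof of Theorem~\ref{thm:no-00-point}, ``there must be a robot in $X_{10}$ at each time $t$,'' and again in Lemma~\ref{lem:nec2}(1)); your window version, exactly as proven, would not support those steps. The fix is cosmetic rather than conceptual: run your endpoint argument at one arbitrary instant $t\geq\frac{I(x)}{2}$ --- the two travel-time bounds of $\frac{I(x)}{2}$ on either side of $t$ already give a gap strictly greater than $I(x)$, with no need for the window hypothesis at all.
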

\begin{proof}
Reset time to $t_0=0$. Aiming at contradiction, assume there is no robot in $R(x)$ at $t\geq \frac{I(x)}{2}$. Since both robots have speed 1, no robot visited $x$ in the
period $[t-\frac{I(x)}{2},t]$ and no robot is able to visit $x$ in the period $[t, t+\frac{I(x)}{2}]$. Thus, $\mathcal{A}$ is not a feasible patrolling schedule.
\qed
\end{proof}
For simplicity, we may also assume that in any patrolling schedule (hence in feasible schedules as well), the position of robot $r_1$ in the unit interval is always to the left of the position of $r_2$, as otherwise we can exchange the roles of the robots whenever they swap while they meet. We summarize as follows. 
\begin{observation}
\label{obs:order}
In any patrolling schedule of \patrol, $r_1$ ($r_2$) is the only robot patrolling $y_1=0$ ($y_n=1$), and $r_1$ stays always to the left of $r_2$. 
\ignore{
Assume that there exists a feasible 2-robot solution. Then, there exists a feasible
solution in which, at each time $t$, the order of the position of the robots does not change during the execution of the solution.
}
\end{observation}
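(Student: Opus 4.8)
The plan is to prove the two assertions in order: first establish the left–right ordering of the robots by a relabeling (``swapping'') argument, and then deduce the endpoint-assignment claim as an immediate consequence of the ordering together with the fact that $0$ and $1$ are the extreme points of the unit interval. I would fix an arbitrary patrolling schedule and denote by $x_1(t), x_2(t) \in [0,1]$ the positions of the two robots at time $t$. Since robots have unit speed, each $x_j$ is continuous and $1$-Lipschitz in $t$. I would then define the relabeled trajectories
$$\tilde x_1(t) := \min\{x_1(t), x_2(t)\}, \qquad \tilde x_2(t) := \max\{x_1(t), x_2(t)\},$$
so that by construction $\tilde x_1(t) \le \tilde x_2(t)$ for every $t$; this says that the robot we continue to call $r_1$ is always weakly to the left of $r_2$. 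Taking pointwise min and max is exactly the way to implement the informal ``exchange the roles of the robots whenever they swap'': the two labelings differ only across instants at which $x_1(t)=x_2(t)$, and by continuity such a coincidence must occur whenever the sign of $x_1-x_2$ changes (intermediate value theorem), so the relabeling is well defined and visits neither disappear nor appear.

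The key technical step is to verify that this relabeling yields a legal, equivalent schedule. I would first observe that the pointwise minimum and maximum of two $1$-Lipschitz functions are again $1$-Lipschitz, via the elementary inequality $|\min\{a,b\}-\min\{c,d\}| \le \max\{|a-c|,|b-d|\}$ and its analogue for $\max$; applying these with $a=x_1(t),\,b=x_2(t),\,c=x_1(s),\,d=x_2(s)$ shows that $\tilde x_1,\tilde x_2$ are admissible unit-speed trajectories. I would then note that at every time $t$ the unordered pair of occupied positions is unchanged, $\{\tilde x_1(t),\tilde x_2(t)\}=\{x_1(t),x_2(t)\}$; hence every point of $S$ is visited at exactly the same set of times under both schedules, the waiting times $w(y_i)$ are identical, and feasibility (indeed the approximation ratio) is preserved. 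This justifies assuming without loss of generality that $r_1$ stays to the left of $r_2$, which is the second assertion of the observation.

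Finally, for the endpoint-assignment claim I would invoke the ordering just established. To visit $y_1=0$ a robot must be located at position $0$; but since $0 \le \tilde x_1(t) \le \tilde x_2(t)$, whenever $\tilde x_2(t)=0$ we also have $\tilde x_1(t)=0$, so every visit to the leftmost point is realized by $r_1$ and may be attributed to it alone. The symmetric argument at position $1$ shows that $r_2$ alone need handle $y_n=1$. The main obstacle is purely in making the swapping argument rigorous, namely the $1$-Lipschitz preservation under min and max and the fact that relabeling neither loses nor creates visits; once this is in place, the left–right ordering and the two endpoint statements follow immediately.
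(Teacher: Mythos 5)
Your proof is correct and follows essentially the same route as the paper, which states this observation without a formal proof, justifying it only by the remark that one can ``exchange the roles of the robots whenever they swap while they meet.'' Your pointwise $\min/\max$ relabeling, the $1$-Lipschitz verification, and the attribution of endpoint visits are precisely a rigorous implementation of that swapping argument.
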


We are now ready to prove Theorem~\ref{thm:no-00-point}.

\begin{proof}[Theorem \ref{thm:no-00-point}]
First, we show implication $(\Rightarrow)$ by contraposition. If Condition (1) is not satisfied, then there exists $x\in S_{10}$ such that $x\notin X_{10}$.  Fix a feasible schedule $\mathcal A$. By Observation \ref{obs:order}, we may assume that $r_1$ stays to the left of $r_2$, throughout the execution of the schedule. By Observation \ref{obs:nec1}, there must be a robot in $X_{10}$ at each time $t$. Thus, $r_1$ must be in $X_{10}$ at each time $t$. Consequently, $x\in S_{10}\setminus X_{10}$ is visited only by $r_2$. But $r_2$ has to visit point $y_n=1$, and by definition of $S_{10}$ we know that $1\notin R(x)$. Therefore, $\mathcal A$ is not a feasible schedule.
By definition of $S_{10}$,  for all $x\in S_{10}$, we have $0\in R(x)$. Therefore $0\in X_{10}$.
A similar argument proves that Condition (2) is satisfied. 

By (1) and (2), there exist $a,b\in (0,1)$ such that $X_{10}=[0,a]$ and $X_{01}=[b,1]$. Now suppose that Condition (3) is not satisfied. Then $a<b$, and there is a point $x\in S$ such that $a<x<b$, and therefore neither $r_1$ nor $r_2$ can visit $x$.

For implication $(\Leftarrow)$, assume that (1)-(3) are satisfied. Consider a partition traversal $A$, where
$r_1$ is searching $X_{10}\setminus X_{01}$ and $r_2$ is searching $X_{01}$. Then, by the definition of the ranges $R(x)$, $X_{10}$ and $X_{01}$, the traversal $A$ is feasible.
\qed
\end{proof}

The complication of instances when $S_{00}$ is non empty is that in a feasible solution, points in $S_{00}$ have to be interchangeably patrolled by both $r_1,r_2$, which further requires appropriate synchronization between them. Even though a characterization of feasibility for such instances is eluding us, we provide below a necessary condition. This condition will be useful also later on. 
\ignore{
Consider a point $x\in S_{00}$. Then by definition of $S_{00}$ we know that $0,1 \notin R(x)$. On the other hand by Observation \ref{obs:nec1} a robot is always present in $R(x)$. But the robot that visits $x$ at time $t$ cannot afford to traverse to visit either 0 or 1 and return to $x$ for its next visit. So either a robot must be assigned to traverse $R(x)$ perpetually or the point $x$ must be visited by both robots $r_1$ and $r_2$. Therefore if a feasible solution exists, the robots may require synchronized trajectories to ensure one of them is traversing $R(x)$ while the other one is visiting the other points.

Next we consider the inputs for which $S_{00}\neq\emptyset$. In this case we present a necessary condition for the existence of a feasible solution. 
}
\begin{lemma}
\label{lem:nec-condition}
For every feasible instance of \patrol, we have  $S_{00}\subset\bigcap_{x\in S} R(x)$.
\end{lemma}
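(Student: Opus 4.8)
The plan is to prove the statement by contradiction, in the contrapositive spirit: I will assume some $z\in S_{00}$ fails to lie in $R(x)$ for a point $x\in S$ and show the instance cannot be feasible. Fix such a pair. The case $x=z$ is vacuous since $z\in R(z)$, and the two remaining cases are symmetric under the reflection $y\mapsto 1-y$, which swaps the endpoints $0,1$, swaps the roles of $r_1,r_2$ by Observation~\ref{obs:order}, and preserves both $S_{00}$ and all ranges. So I may assume $x>z$. Since $\max R(x)\ge x>z$, the only way to have $z\notin R(x)$ is that $z$ lies to the \emph{left} of $R(x)$, i.e. $z<x-\frac{I(x)}{2}=\min R(x)$; note the left endpoint is not truncated here because $x-\frac{I(x)}{2}>z>0$.

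The key step is to exploit the ordering of the robots to show that $z$ can be served by \emph{at most one} robot. By Observation~\ref{obs:nec1}, at every moment of the execution at least one robot lies in $R(x)$, and that robot has position at least $\min R(x)>z$. By Observation~\ref{obs:order} we always have $r_1\le r_2$, so whichever robot realizes this coverage of $R(x)$, its position bounds $r_2$ from below; hence $r_2\ge \min R(x)>z$ at all times. Consequently $r_2$ never reaches the point $z$, so every visit to $z$ is performed by $r_1$ alone. This is precisely the place where I expect the main obstacle to sit: because $R(x)$ and $R(z)$ may \emph{overlap}, one cannot simply argue ``both $R(x)$ and $R(z)$ need a robot, contradiction''; the ordering $r_1\le r_2$ is exactly what breaks the symmetry and pins the service of $z$ onto a single robot.

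It then remains to turn ``$r_1$ alone serves $z$'' into a travel-time contradiction. By Observation~\ref{obs:order}, $r_1$ is also the only robot that can visit the left endpoint $y_1=0$, and feasibility forces $0$ to be visited, so $r_1$ is at $0$ at some time $t^\ast$. Since only $r_1$ visits $z$, feasibility bounds the gap between consecutive $r_1$-visits to $z$ by $I(z)$; let $s<t^\ast<s'$ be the consecutive visits sandwiching $t^\ast$ (strict because $r_1(t^\ast)=0\neq z$), so that $s'-s\le I(z)$. From $r_1(s)=r_1(s')=z$, $r_1(t^\ast)=0$ and unit speed we get $t^\ast-s\ge z$ and $s'-t^\ast\ge z$, hence $s'-s\ge 2z$. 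But $z\in S_{00}$ gives $0\notin R(z)$, i.e. $z-\frac{I(z)}{2}>0$, equivalently $2z>I(z)$; thus $s'-s\ge 2z>I(z)$, a contradiction. This settles the case $x>z$, and by the reflection symmetry the case $x<z$ is identical with $r_1$ replaced by $r_2$, the endpoint $0$ by $1$, and the inequality $2(1-z)>I(z)$ (coming from $1\notin R(z)$) in place of $2z>I(z)$. Therefore every $z\in S_{00}$ lies in $R(x)$ for all $x\in S$, i.e. $S_{00}\subseteq\bigcap_{x\in S}R(x)$.
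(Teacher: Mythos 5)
Your proof is correct and follows essentially the same route as the paper's: both arguments use Observation~\ref{obs:nec1} together with the ordering convention of Observation~\ref{obs:order} to pin all service of the offending $S_{00}$ point onto a single robot, and then derive a travel-time contradiction from the fact that this point's range excludes the endpoint that same robot is responsible for. The only difference is cosmetic and mirror-image (due to opposite WLOG choices): you phrase the contradiction as a violation of the $S_{00}$ point's idleness forced by the round trip to the endpoint $0$, whereas the paper observes that the pinned robot can never stray far enough from its $S_{00}$ point to reach the endpoint $1$ at all.
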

\begin{proof}
Suppose to the contrary, that there are $x\in S_{00}$ and $y\in S$, such that $x\notin R(y)$. By Observation \ref{obs:nec1}, a robot is always present inside $R(y)$. Therefore the other robot must visit $x$. Without loss of generality assume that $y<x$. The robot that visits $y$ cannot pass the point $y+\frac{I(y)}{2}<x$. Also the robot that visits $x$ cannot pass the point $x+\frac{I(x)}{2}$. Since $x\in S_{00}$ then $x+\frac{I(x)}{2}<1$. This means that no robot can visit point $y_n=1$.
\qed 
\end{proof}

\section{A Simple 4-Approximation Patrolling Schedule}
\label{sec: 4-feasible}
In light of Theorem~\ref{thm:no-00-point}, it is interesting to study feasible instances of \patrol\ that may have points that cannot be patrolled by one robot, i.e. for which $S_{00} \not = \emptyset$. 
As a warm-up, we provide a 4-feasible patrolling schedule for such instances.
The advantage of this schedule is that  robots' trajectories are simple and no coordination is required.
%: each robot is assigned has a initial zigzagging segment and will remain in its' assigned territory forever. 
%
\begin{theorem}
\label{thm:4-approximation}
Feasible instances of \patrol\ admit an 4-approximate patrolling schedule.
\end{theorem}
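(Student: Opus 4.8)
The plan is to reduce to Theorem~\ref{thm:no-00-point} whenever possible and otherwise to run a coordination-free ``partition oscillation'' whose gaps I control through a fourfold expansion of the ranges. First I would dispose of the case $S_{00}=\emptyset$: here Theorem~\ref{thm:no-00-point} already yields an efficient $1$-approximation, which is a fortiori a $4$-approximation, so from now on assume $S_{00}\neq\emptyset$. Then Lemma~\ref{lem:nec-condition} gives $\emptyset\neq S_{00}\subseteq X:=\bigcap_{x\in S}R(x)$, so the common intersection $X=[\ell,u]$ is a nonempty interval; this is the anchor for the split point.

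The schedule itself is a pure partition: choose a single split point $c\in X$, let $r_1$ oscillate back and forth on $[0,c]$ and $r_2$ on $[c,1]$. Since the two robots occupy essentially disjoint intervals they never need to coordinate, $r_1$ stays left of $r_2$ as in Observation~\ref{obs:order}, and the trajectories are clearly computable in linear time. For an oscillation on an interval $[p,q]$, the waiting time of a point $z\in[p,q]$ is exactly $2\max\{z-p,\,q-z\}$. Hence the whole analysis reduces to showing, for every point $x\le c$ assigned to $r_1$, that $\max\{x,\,c-x\}\le 2I(x)$ (and the mirror inequality for $x\ge c$); this is precisely $w(x)\le 4I(x)$.

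I would then observe that the ``$c-x$'' terms are free: because $c\in X\subseteq R(x)$ we have $|c-x|\le I(x)/2\le 2I(x)$. What remains is the endpoint term, $x\le 2I(x)$ on the left and $1-x\le 2I(x)$ on the right; equivalently, each point must have an endpoint of $[0,1]$ inside its \emph{expanded range} $\widehat R(x):=[\max\{0,x-2I(x)\},\min\{1,x+2I(x)\}]$, whose radius $2I(x)$ is four times that of $R(x)$. For $x\in S_{10}\cup S_{11}$ the condition $0\in R(x)$ already gives $x\le I(x)/2\le 2I(x)$, and symmetrically for $S_{01}\cup S_{11}$; so these points are sent to the robot whose endpoint lies in $\widehat R(x)$, and the choice of $c$ is made (exactly as in the proof of Theorem~\ref{thm:no-00-point}, but with $\widehat R$ in place of $R$) in the overlap of the two expanded ``reachable'' regions. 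In effect I am re-running the feasibility characterization of Theorem~\ref{thm:no-00-point} for the quadrupled ranges, where the analogue of $S_{00}$ being empty is the statement that every point can be served by a robot oscillating to an endpoint with only a fourfold frequency loss.

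The main obstacle is the set $S_{00}$: I must show that feasibility forces $\min\{x,1-x\}\le 2I(x)$ for every $x\in S_{00}$ (so that the expanded-range analogue of $S_{00}$ is empty) and that the resulting left/right assignment is realizable by one split point. This does not follow from Lemma~\ref{lem:nec-condition}, which only bounds $I(y)$ from below in terms of distances to $S_{00}$ and says nothing about $I(x)$ itself for $x\in S_{00}$. The plan here is to squeeze the bound out of Observations~\ref{obs:nec1} and~\ref{obs:order}: since $r_1$ is the unique visitor of $0$ and $r_2$ the unique visitor of $1$, and since $0,1\notin R(x)$ for $x\in S_{00}$, whenever one robot is pinned at its endpoint the other must lie in $R(x)$ by Observation~\ref{obs:nec1}; forcing both robots to ferry perpetually between their endpoints and $R(x)$ while keeping $x$ fresh yields, by a travel-time accounting, a lower bound on $I(x)$ of the form (constant)$\,\cdot\min\{x,1-x\}$. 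Verifying that this constant is good enough to make every oscillation gap at most $4I(x)$ simultaneously, and that the left-serving and right-serving points can be separated by a common $c\in X$, is the delicate part; the geometric gap computation and the reduction to Theorem~\ref{thm:no-00-point} are then routine.
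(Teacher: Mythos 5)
Your proposal has a fatal gap: the claim you defer to the end --- that feasibility forces $\min\{x,1-x\}\le 2I(x)$ for every $x\in S_{00}$ --- is simply false, and no travel-time accounting can recover it. Consider the instance $y_1=0$, $y_2=\tfrac12$, $y_3=1$ with $I(0)=I(1)=2$ and $I(\tfrac12)=\varepsilon$ for arbitrarily small $\varepsilon>0$. It is feasible: alternate the robots with handoffs at $\tfrac12$, i.e., during $[0,1]$ robot $r_1$ runs $\tfrac12\to 0\to\tfrac12$ while $r_2$ waits at $\tfrac12$, during $[1,2]$ robot $r_2$ runs $\tfrac12\to 1\to\tfrac12$ while $r_1$ waits at $\tfrac12$, and so on; then $w(0)=w(1)=2$ and $w(\tfrac12)=0$. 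Here $\tfrac12\in S_{00}$, yet $\min\{x,1-x\}=\tfrac12$ while $2I(x)=2\varepsilon$. Moreover, on this instance \emph{every} uncoordinated partition schedule fails: whichever side of the split point $c$ the point $\tfrac12$ lands on, its serving robot must also sweep to the endpoint of its cell, so $w(\tfrac12)\ge 2\cdot\tfrac12=1$ and the ratio is at least $1/\varepsilon$, which is unbounded. So the defect is not in your bookkeeping but in the schedule design itself: a point of $S_{00}$ with very small idleness can only be served by both robots alternating visits (a single robot serving it would be confined to its tiny range, leaving the other robot unable to visit both $0$ and $1$), and this intrinsically requires the coordination that a pure partition renounces. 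This is the same phenomenon the paper flags when it notes that a single robot cannot achieve constant approximation when a central point is very urgent.

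The paper's proof sidesteps this with a \emph{nested}, not partitioned, pair of trajectories. Let $x$ be a point of minimum idleness $I$. If $I\ge\tfrac12$, a single robot sweeping $[0,1]$ is already a 4-approximation. Otherwise the inner robot zigzags in $R(x)$, which serves every $y\in R(x)$ with waiting time $2I\le 2I(y)$ --- so arbitrarily urgent central points are handled by comparison with the minimum idleness, not with a distance to an endpoint --- while the outer robot sweeps all of $[0,1]$, serving every remaining point with waiting time $2$. Feasibility is then used on the lower-bound side: by Observation~\ref{obs:nec1}, in any feasible schedule one robot is trapped in $R(x)$ whenever the other travels to an endpoint, and a short case analysis shows $I(y)\ge\tfrac12$ for every $y\notin R(x)$, giving ratio at most $4$. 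That use of feasibility to lower-bound the idleness of the \emph{other} points, rather than of the $S_{00}$ points themselves, is the idea your proposal is missing, and it is available precisely because the inner robot's interval is $R(x)$ for the minimum-idleness point rather than a cell of a partition.
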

\begin{proof}[Theorem \ref{thm:4-approximation}]
Let $A$ be a feasible solution. Let $I=\min_{y\in S}I(y)$ and let $x\in S$ be such that $I(x) = I$.
If $I\geq \frac 12$, then one robot patrolling the interval $[0,1]$ gives a 4-approximation solution. Thus, we may assume that $I\leq \frac 12$. 

According to Observation \ref{obs:nec1}, at least one robot stays in $R(x)$ during $A$, at each time $t$. We claim that a nested traversal $\mathcal{A}$ in which one robot traverses $[0,1]$ and the other robot traverses $R(x)$ is a 4-approximation.

We split the interval $[0,1]$ into $A = [0, a]$, $R(x) = [a, a+I] = [a, 1-b]$ and $B = [1-b, 1]$, where $a + I + b = 1$.
First, note that the waiting time of each $y\in R(x)$ during $\mathcal{A}$ is $w_{\mathcal{A}}(y) = 2I = 2I(x)\leq 2I(y)$.
Thus, it remains to show that $w_{\mathcal{A}}(y)\leq 4I(y)$ for each point $y\in A\cup B$.

Without loss of generality assume that $|A|=a< b=|B|$. Using the assumption $I\leq \frac 12$ and $a + I + b = 1$,
we have $a+b\geq \frac 12$, and therefore $b\geq\frac 14$.
Using Observation~\ref{obs:order}, we consider a feasible schedule $\mathcal{B}$ in which $r_1$ is always to the left of $r_2$.
By Observation \ref{obs:nec1}, at least one robot stays in $R(x)$ at each time during $\mathcal{B}$. We consider the following cases:
\begin{description}
\item[(Case $y\in A$):] 
As at each time moment there must exist a robot in $R(x)$, then in $\mathcal{B}$ robot $r_1$ has to stay in $R(x)$ while $r_2$ is traversing $B=[1-b,1]$ twice to visit $y_n=1$ and return to $R(x)$. Therefore the waiting time $w_{\mathcal{B}}$ satisfies
$I(y)\geq w_{\mathcal{B}}\geq 2b\geq 2 \frac 14=\frac 12.$
On the other hand
$w_{\mathcal{A}}(y)=2=4 \frac 12\leq 4 I(y).$
\item[(Case $y\in B$):]
Let $y'=y-(a+I)$, thus $y'$ is the distance of $y$ to $R(x)$.
Consider a time $t$ during the execution of $\mathcal{B}$ at which $r_1$ leaves $R(x)$ in order to visit the point 0. As $r_2$ must be in $R(x)$ at $t$, the last visit of $y$ before $t$ was at time $t'\leq t-y'$. Then, it has to stay in $R(x)$ for at least $2a + y'$. The time between two consecutive visits at $y$ is at least $t + 2a + y'-(t - y') = 2a + 2y'$. On the other hand, in order to visit 1, $r_2$ has also time at least $2(1-y')$ between two consecutive visits of y. Altogether
$w_{\mathcal{B}}(y)\geq \max\{2(a + y'), 2(b- y')\}.$
Thus
$w_{\mathcal{B}}(y)\geq \frac 12[2(a + y')+ 2(b- y')]=a+b\geq \frac 12.$
On the other hand $w_\mathcal{A}(y) = 2$ and thus $w_\mathcal{A}(y)\leq 4w_{\mathcal{B}}\leq 4I(y)$.
\end{description}
\qed
\end{proof}

\section{A $\sqrt{3}$-Approximation Patrolling Schedule}
\label{sec: sqrt3-feasible}

The bottleneck toward patrolling instances of \patrol\ is caused by points which require the coordination of both robots in order to be patrolled, i.e. instances in which $S_{00} \not= \emptyset$. In order to improve upon the 4-feasible schedule of Theorem \ref{thm:4-approximation}, we need to understand better the visitation requirements of points in $S_{00}$, as well as their relative positioning in the path to be patrolled. The result of our analysis, and our main contribution, is an elaborate $\sqrt{3}$-feasible patrolling schedule. 
\begin{theorem}
\label{thm:sqrt3-approximation}
Feasible instances of \patrol\ admit an efficient $\sqrt{3}$-approximate patrolling schedule.
\end{theorem}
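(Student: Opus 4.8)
The plan is to dispose of the easy case first and then interpolate between two complementary schedules. If $S_{00}=\emptyset$, Theorem~\ref{thm:no-00-point} already furnishes an exact ($1$-approximate) partition schedule, so I would assume $S_{00}\neq\emptyset$. For such instances Lemma~\ref{lem:nec-condition} confines all of $S_{00}$ to the common intersection $\bigcap_{x\in S}R(x)$, so there is a central region that both robots must keep revisiting while still reaching the extremes $0$ and $1$ in time. I would attach to every feasible instance a single structural parameter $a\in[0,1]$ --- the \emph{expansion} --- measuring the extent of this forced overlap (the size of the central coordination region relative to the wings reaching the endpoints), and design two schedules whose performance degrades in opposite directions as $a$ varies. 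The structural facts needed to control waiting times in terms of $a$ are precisely those collected in Section~\ref{sec:observations}.

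The first schedule, Algorithm~\ref{alg:partition-patrolling} of Section~\ref{sec: 1+2a-approx}, is partition-based and requires no run-time coordination: each robot is responsible for its own part of a partition of the path and oscillates over it. I would show this schedule is $(1+2a)$-approximate, which is good precisely when $a$ is small. The second schedule, Algorithm~\ref{alg:sqrt(3)} of Section~\ref{sec: (2+a)/(1+a)-approx}, is coordinated: the two robots sweep the path while maintaining a prescribed critical minimum separation, so that one of them is always available to service the central $S_{00}$ points while the other advances toward an endpoint. I would prove this schedule is $\tfrac{2+a}{1+a}$-approximate, which is good when $a$ is large. Both schedules are efficient, their trajectories being determined by a constant number of turning points computable in linear time.

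To finish, given an instance I would compute $a$ in linear time and run whichever of the two schedules carries the smaller guarantee, so the attained ratio is $\min\{\,1+2a,\ \tfrac{2+a}{1+a}\,\}$. Since $1+2a$ is increasing and $\tfrac{2+a}{1+a}$ is decreasing in $a$, this minimum is largest at their crossing point, where $(1+2a)(1+a)=2+a$, i.e. $2a^2+2a-1=0$, giving $a=\tfrac{\sqrt3-1}{2}$ and common value $1+2a=\sqrt3$. Hence the ratio never exceeds $\sqrt3$ for any feasible instance, as claimed. The main obstacle is not this final balancing but the design and waiting-time analysis of the coordinated Algorithm~\ref{alg:sqrt(3)}: one must verify that the critical-distance invariant simultaneously guarantees timely revisits of the $S_{00}$ points and of both endpoints, and then bound the resulting gaps by $\tfrac{2+a}{1+a}$ times each idleness requirement, leaning on the observations of Section~\ref{sec:observations}.
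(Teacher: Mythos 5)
Your proposal is correct and follows essentially the same route as the paper: you reduce to the case $S_{00}\neq\emptyset$ via Theorem~\ref{thm:no-00-point}, introduce the expansion parameter, invoke the uncoordinated $(1+2\alpha)$-approximate schedule and the coordinated $\frac{2+\alpha}{1+\alpha}$-approximate schedule (Lemmata~\ref{lem: 1+2a-approx} and~\ref{lem: (2+a)/(1+a)-approx}), and balance the two guarantees at $\alpha=\frac{\sqrt{3}-1}{2}$, where both equal $\sqrt{3}$. Your balancing computation ($2\alpha^2+2\alpha-1=0$) matches the paper's conclusion exactly.
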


In what follows, we explicitly assume that $S_{00} \not= \emptyset$, as otherwise, due to Theorem~\ref{thm:no-00-point}, we can easily find feasible schedules for instances of \patrol\ that admit feasible solutions. 
Next, we introduce a key notion to our algorithms. 
\begin{definition}
\label{def: expanding}
Given an instance of \patrol\ we identify critical points $x_1, \ldots,x_4$ that are defined as follows: 
$\bigcap_{x\in S_{00}} R(x)=[x_1, x_4]$, and $x_2,x_3$ are the leftmost and rightmost points  point in $S_{00}$, respectively. The instance is called $\alpha$-expanding if $x_1=\frac{\alpha}{1+\alpha}x_4$. 
\end{definition}

Theorem~\ref{thm:sqrt3-approximation} is an immediate corollary of the following Lemmata~\ref{lem: 1+2a-approx},~\ref{lem: (2+a)/(1+a)-approx}  that we prove in subsequent Sections~\ref{sec: 1+2a-approx},~\ref{sec: (2+a)/(1+a)-approx}, respectively. The lemmata are interesting in their own right, since they explicitly guarantee good approximate schedules as a function of the expansion of the given instance. 

\begin{lemma}
\label{lem: 1+2a-approx}
Feasible $\alpha$-expanding instances of \patrol\ admit an efficient $(1+2\alpha)$-approximate patrolling schedule.
\end{lemma}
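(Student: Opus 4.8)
The plan is to design an explicit patrolling schedule for a feasible $\alpha$-expanding instance and show its approximation ratio is $1+2\alpha$. The key structural facts I would lean on are: by Definition~\ref{def: expanding}, $\bigcap_{x\in S_{00}} R(x)=[x_1,x_4]$ with $x_1=\frac{\alpha}{1+\alpha}x_4$, and Lemma~\ref{lem:nec-condition} guarantees that for a \emph{feasible} instance every point of $S_{00}$ lies in $\bigcap_{x\in S}R(x)\subseteq[x_1,x_4]$. The crux is that all the ``hard'' points (those in $S_{00}$ needing both robots) are concentrated in the central window $[x_1,x_4]$, while the remaining points in $S_{01}\cup S_{10}$ can each be served by a single robot whose motion stays anchored to one endpoint. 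So I would first isolate the central segment $[x_1,x_4]$ that both robots must cover cooperatively, and partition responsibility for the outer regions $[0,x_1]$ and $[x_4,1]$ to $r_1$ and $r_2$ respectively, consistent with Observation~\ref{obs:order}.

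First I would define the schedule concretely. The natural candidate is a coordinated ``mirrored'' or ``ping-pong'' traversal: $r_1$ sweeps the left part $[0, z]$ for some meeting point $z\in[x_1,x_4]$ and $r_2$ sweeps the right part $[z',1]$, with the two robots synchronized so that the central points in $[x_1,x_4]$ are alternately visited by whichever robot is currently sweeping inward. The factor $\frac{\alpha}{1+\alpha}$ relating $x_1$ to $x_4$ is exactly what controls how wide the central overlap region is relative to the whole interval, so I expect the turning points to be chosen as functions of $x_1,x_4$ (likely $z=x_4$ and $z'=x_1$, letting each robot penetrate fully across the central window). I would then compute, for each class of point, the worst-case waiting time $w_{\mathcal A}(y)$ induced by this schedule.

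Next I would bound $w_{\mathcal A}(y)/I(y)$ in three regimes. For a central point $y\in S_{00}$, the robots visit it in strict alternation, and since $y$ lies in every range $R(x)$, $x\in S$, its idleness budget $I(y)$ is comparable to the length of the window it sits in; the relation $x_1=\frac{\alpha}{1+\alpha}x_4$ should yield a worst-case ratio of exactly $1+2\alpha$ here, which is why that quantity appears in the bound. For outer points in $S_{10}$ (served by $r_1$ anchored at $0$) and symmetrically $S_{01}$ (served by $r_2$ anchored at $1$), I would argue as in Theorem~\ref{thm:no-00-point} that a single robot's sweep respects the range constraints, using that $0\in R(y)$ (resp. $1\in R(y)$) for such points, so these points incur ratio at most the central one. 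The delicate accounting is to verify that the time $r_1$ spends away from $0$ servicing the center (and symmetrically for $r_2$) does not violate the outer points' requirements by more than the $1+2\alpha$ factor.

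The main obstacle I anticipate is the synchronization analysis of the central window: I must verify that the two robots can be phased so that every $y\in[x_1,x_4]$ is genuinely revisited every $w_{\mathcal A}(y)\le (1+2\alpha)I(y)$ units without the robots ``missing'' each other, and that feasibility (via Lemma~\ref{lem:nec-condition} and the definition of $\alpha$-expansion) forces $I(y)$ to be large enough to absorb the detours each robot makes to its anchored endpoint. Pinning down the exact worst-case point and confirming the constant is $1+2\alpha$ rather than something larger will require the careful case split, and establishing tightness is deferred to Appendix~\ref{sec: tight Algo1}. I would also confirm that the turning points are computable in linear time so the schedule qualifies as \emph{efficient} in the sense of Section~\ref{sec: definitions}.
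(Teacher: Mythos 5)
Your proposed schedule is not the paper's, and as specified it does not achieve the claimed bound: the gap lies in your choice of turning points $z=x_4$, $z'=x_1$ together with the reliance on central coordination. The paper's Algorithm~\ref{alg:partition-patrolling} is a pure partition with \emph{no} overlap and \emph{no} coordination: $r_1$ zigzags on $[0,x_3]$ and $r_2$ on $[x_3,1]$, where $x_3$ is the \emph{rightmost point of $S_{00}$}, not the right endpoint $x_4$ of the window $\bigcap_{x\in S_{00}}R(x)$. This distinction is exactly what makes the outer points work: by Lemma~\ref{lem:nec-condition} every $y\in S$ has $x_3\in R(y)$, so a left-outer point satisfies $I(y)\geq 2\max\{y,x_3-y\}$, which equals its waiting time under a zigzag on $[0,x_3]$; hence outer points have ratio at most $1$. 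No such guarantee holds for $x_4$: feasibility only forces $x_4-x_3\leq x_3-x_1$, so $x_4-x_3$ can be as large as half the window. Concretely, on the paper's own tight instance (Appendix~\ref{sec: tight Algo1}), with $2x_1+\frac{x_1}{\alpha}=1$, point $0$ has $I(0)=4x_1+\frac{x_1}{\alpha}$, while any schedule in which $r_1$ sweeps all the way to $x_4=x_1+\frac{x_1}{\alpha}$ gives $w(0)\geq 2x_4=2x_1+\frac{2x_1}{\alpha}$, so $\frac{w(0)}{I(0)}\geq\frac{2\alpha+2}{4\alpha+1}\to 2$ as $\alpha\to 0$, which exceeds $1+2\alpha$ for all small $\alpha$ --- precisely the regime where this lemma has content. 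Synchronization cannot repair this, since $0$ is visited only by $r_1$ and pausing or alternating only lengthens its gap.

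Your mechanism for the central window is also off target. You attribute the factor $1+2\alpha$ to an alternation argument for points of $S_{00}$, but the paper uses no alternation at all: every point of $[x_1,x_3]$ is visited by $r_1$ alone, its waiting time is $2\max\{x,x_3-x\}$, and the bound comes from comparing against the feasibility lower bound $I(x)\geq 2\max\{x-x_1,x_4-x\}$ of Lemma~\ref{lem:idle-time}; the worst case is near $x=\frac{x_1+x_4}{2}$, where $\frac{x}{x_4-x}\leq\frac{x_1+x_4}{x_4-x_1}=1+2\alpha$. The coordinated scheme you describe, with one robot pinned inside the window while the other performs its endpoint round trip, is essentially the paper's Algorithm~\ref{alg:sqrt(3)}, whose guarantee is $\frac{2+\alpha}{1+\alpha}$ (tending to $2$ as $\alpha\to 0$), not $1+2\alpha$; it is the complementary tool, useful only for large $\alpha$, and cannot be the basis of a proof of this lemma.
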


\begin{lemma}
\label{lem: (2+a)/(1+a)-approx}
Feasible $\alpha$-expanding instances of \patrol\ admit an efficient $\frac{2+\alpha}{1+\alpha}$-approximate patrolling schedule.
\end{lemma}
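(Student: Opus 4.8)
The plan is to exhibit an explicit coordinated patrolling schedule (the schedule formalized as Algorithm~\ref{alg:sqrt(3)}) and to verify, point by point, that every $y_i$ is visited with normalized waiting time $w(y_i)/I(y_i)\le \frac{2+\alpha}{1+\alpha}$. A convenient first step is to recast the hypothesis: since the instance is $\alpha$-expanding, $x_1=\frac{\alpha}{1+\alpha}x_4$, which is equivalent to $\frac{x_1}{x_4}=\frac{\alpha}{1+\alpha}$ and hence to the clean identity $\frac{2+\alpha}{1+\alpha}=2-\frac{x_1}{x_4}=\frac{2x_4-x_1}{x_4}=:c$. Thus it suffices to build a schedule whose waiting time at every point is at most $c$ times its idleness, with $c$ expressed directly through the two critical abscissae $x_1\le x_4$. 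This reformulation is what ties the geometry of the central region $[x_1,x_4]=\bigcap_{x\in S_{00}}R(x)$ to the target ratio.

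Next I would record the structural consequences of feasibility that the analysis consumes. From Lemma~\ref{lem:nec-condition} we have $S_{00}\subseteq\bigcap_{x\in S}R(x)$, so $x_1\le x_2\le x_3\le x_4$ and, crucially, every point $y\in S$ satisfies $[x_2,x_3]\subseteq R(y)$, i.e. $I(y)\ge 2\max\{x_3-y,\,y-x_2\}$; for $x\in S_{00}$ this sharpens to $I(x)\ge 2\max\{x_4-x,\,x-x_1\}$ from $[x_1,x_4]\subseteq R(x)$. Observation~\ref{obs:nec1} gives the operational constraint that some robot must lie in $R(x)$ in every window of length $I(x)/2$, and Observation~\ref{obs:order} lets me fix $r_1$ to the left of $r_2$ throughout. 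Beyond these, I would establish (as the paper does in Section~\ref{sec:observations}) a few further position-versus-idleness inequalities, namely lower bounds on $I(y_1)$, $I(y_n)$ and on the idleness of the outer points in terms of their distance to the central region, which are exactly what is needed to absorb the endpoint excursions into the ratio $c$.

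The schedule itself has both robots spend most of their time oscillating over the central interval delimited by the two special turning points $x_1$ and $x_4$ (the only points the robots must remember), while taking turns making excursions to the far endpoints: $r_1$ periodically dashes to $y_1=0$ and back while $r_2$ keeps the centre covered, and symmetrically $r_2$ dashes to $y_n=1$ while $r_1$ keeps the centre covered. The hand-offs are coordinated so that $r_1$ never passes $r_2$ and their separation never drops below a predetermined critical distance; this is what guarantees that the alternating visits to a central $S_{00}$ point are spread out in time rather than bunched. I would then partition $[0,1]$ into the left zone (served by $r_1$'s excursions), the central zone $[x_1,x_4]\supseteq[x_2,x_3]$ (served by the interleaved oscillations of both robots), and the right zone (served by $r_2$'s excursions), and bound $w(y)/I(y)$ separately in each zone, using the inequalities above together with $c=2-x_1/x_4$. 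For the outer zones the bound follows from the turning-point placement and the feasibility lower bounds on idleness; for the central zone it follows from the worst-case gap of the interleaved schedule.

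The main obstacle is the central zone together with the hand-off timing. Here one must (i) argue the schedule is even realizable with unit-speed robots that keep the critical separation, and (ii) bound the worst-case gap of an $S_{00}$ point, whose idleness can be as small as $x_4-x_1$, while simultaneously ensuring the endpoint excursions fit inside the period without inflating any gap beyond $c\cdot I(\cdot)$. The delicate case is a central point visited by $r_1$ and then, after the robots drift and turn, by $r_2$: the critical-distance coordination must prevent these two visits from coinciding, and the turning points must be tuned so that the resulting worst-case normalized gap is attained at exactly the configuration giving $\frac{2+\alpha}{1+\alpha}$. I expect the bottleneck configuration to be precisely the one realizing tightness in Appendix~\ref{sec: tight Algo2}, and verifying that the single constant $c$ controls all three zones at once is the crux of the argument.
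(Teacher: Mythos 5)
Your outline follows the paper's own route exactly---the same Algorithm~\ref{alg:sqrt(3)}, the same three-zone decomposition $[0,x_1)\cup[x_1,x_4]\cup(x_4,1]$, the same structural inputs (Observations~\ref{obs:nec1},~\ref{obs:order}, Lemma~\ref{lem:nec-condition}, and the idleness lower bounds of Lemma~\ref{lem:idle-time})---but it stops at the point where the proof actually begins. The entire quantitative content is missing, and you yourself flag it as ``the crux'' without resolving it. Concretely: you never choose the critical separation threshold. The paper sets $d=\frac{1}{1+\alpha}\min\{x_1,x_4-x_1\}$, and this single choice is what makes all three zones work simultaneously. It is a genuine balancing act: for a central point $x\in[x_1,x_4]$ the hand-off between robots adds an additive penalty, giving $w(x)\leq 2\max\{x-x_1,x_4-x\}+d$ (Lemma~\ref{lem:waiting-time}), so against the bound $I(x)\geq 2\max\{x-x_1,x_4-x\}\geq x_4-x_1$ one needs $d\leq\frac{x_4-x_1}{1+\alpha}$; for an outer point $x\in[0,x_1)$ the waiting time is $2\max\{x,1-x-d\}$, where a \emph{larger} $d$ helps, and the verification against $I(x)\geq\max\{2x,2(1-x-x_4+x_1),x_4-x_1\}$ only closes because $d$ is not too small relative to $x_1$ and because $x_1\leq 1-x_4$ (taken WLOG). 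Your reformulation $c=2-x_1/x_4$ is correct but does none of this work for you.

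Beyond the choice of $d$, two further steps that you gesture at are real proof obligations, not routine checks. First, the waiting-time formulas themselves require an argument: e.g., for $x\in[0,x_1)$ one must track that the time $r_1$ spends zigzagging in $[x_1,x_4]$ equals the time $r_2$ needs to go from $x_1+d$ to $1$ and back, yielding $w(x)=2\max\{x,1-x-d\}$; and for the hand-off case one must verify (using $x_1\geq d$) that the incoming robot reaches the abandoned point within $d$ time units, which is where the $+d$ term comes from. Second, the zone-by-zone verification of $w(x)/I(x)\leq\frac{2+\alpha}{1+\alpha}$ splits into the cases $\alpha\leq 1$ and $\alpha\geq 1$ (because $\min\{x_1,x_4-x_1\}$ switches), and within each case into subcases according to which term achieves the maxima in~\eqref{equa: feas ratio}; several of these subcases additionally need the inequality $x_1\leq\frac{\alpha}{2\alpha+1}$, derived from $x_1+(x_4-x_1)+(1-x_4)=1$ and $x_1\leq 1-x_4$ (see Appendix~\ref{sec: tedious cal for feasibility ratio}). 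As written, your proposal is a correct strategic plan that coincides with the paper's, but it does not constitute a proof: the threshold is unspecified, the waiting times are asserted rather than derived, and the final ratio bound is never established.
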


Lemmata~\ref{lem: 1+2a-approx},~\ref{lem: (2+a)/(1+a)-approx} above imply that any feasible $\alpha$-expanding instance admits a 
$\min\left\{
1+2\alpha, \frac{2+\alpha}{1+\alpha}
\right\}$ feasible patrolling schedule. The achieved approximation is the worst when the instance is $\frac{\sqrt{3}-1}{2}$-expanding, in which case, the patrolling schedule is $\sqrt{3}$-feasible. This concludes the proof of Theorem~\ref{thm:sqrt3-approximation}.

Notably, our feasibility bounds above are tight. In %the full paper
Appendices~\ref{sec: tight Algo1} and~\ref{sec: tight Algo2} 
we show that for every $\alpha$, there are feasible $\alpha$-expanding \patrol\ instances for which the performance of our patrolling schedules that prove Lemma \ref{lem: 1+2a-approx} and Lemma~\ref{lem: (2+a)/(1+a)-approx} (see Sections~\ref{sec: 1+2a-approx},~\ref{sec: (2+a)/(1+a)-approx}) is equal to the proposed bound. Hence, the performance analysis of our patrolling schedule showing Theorem~\ref{thm:sqrt3-approximation} cannot be improved.

\subsection{Useful Observations for Feasible \patrol\ Instances }
\label{sec:observations}

In an $\alpha$-expanding instance of \patrol\ we have that $x_1=\alpha(x_4-x_1)$. 
If the instance is also feasible, then by Lemma \ref{lem:nec-condition} we have that $S_{00}\subset \bigcap_{x\in S} R(x)$. Since $S_{00}\subset S$, we obtain that $S_{00}\subset\bigcap_{x\in S_{00}} R(x)=[x_1, x_4]$. Also, it is easy to see that for the critical points $x_1, \ldots, x_4$ we have that $x_1\leq x_2 <x_4$ and that $x_1 < x_3 \leq x_4$. In particular we may assume, without loss of generality, that $x_1\leq 1-x_4$, as otherwise we flip the order of all points. 
\ignore{
Without loss of generality we can assume that $x_1\leq 1-x_4$. Then let
\begin{itemize}
\item $x_1=\alpha(x_4-x_1)$, where $\alpha\in\mathbb{R}$.
\item $x_2\in [x_1,x_4):$ the location of the leftmost point of $S_{00}$.
\item $x_3\in (x_1,x_4]:$ the location of the rightmost point of $S_{00}$.
\end{itemize}
}
Also using Observation~\ref{obs:order}, we assume that the feasible schedule to the \patrol\  instance has robot $r_1$ stay always to the left of $r_2$. 
\begin{lemma}
\label{lem:nec2}
Consider a feasible patrolling schedule $\mathcal{A}$ for a \patrol\ instance. Then 
\begin{enumerate}[(1)]
\item there is always a robot inside the interval $[x_1,x_4]$. 
\item the interval $[0,x_1)$ is only traversed by $r_1$ and the interval $(x_4,1]$ is only traversed by $r_2$. 
\item $0\in R(x)$ for all $x\in [0,x_1)$, and $1\in R(x)$ for all $x\in (x_4,1]$.
\item $x_4-x_3\leq x_3 -x_1$ and $x_2-x_1\leq x_4-x_1$.
\end{enumerate}
\end{lemma}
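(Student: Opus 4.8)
The plan is to prove each of the four claims by invoking the earlier observations (especially Observation~\ref{obs:nec1}, Observation~\ref{obs:order}, and Lemma~\ref{lem:nec-condition}) together with the defining properties of the critical points $x_1,\ldots,x_4$. Throughout I would fix a feasible schedule $\mathcal{A}$ and, using Observation~\ref{obs:order}, assume $r_1$ stays to the left of $r_2$ at all times.

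For claim~(1), I would observe that $[x_1,x_4]=\bigcap_{x\in S_{00}}R(x)$ by Definition~\ref{def: expanding}. Picking any $x\in S_{00}$ (nonempty by our standing assumption), Observation~\ref{obs:nec1} guarantees a robot in $R(x)$ at every time. Since $S_{00}\subset\bigcap_{x\in S}R(x)$ by Lemma~\ref{lem:nec-condition}, and more to the point every point of $S_{00}$ lies in $[x_1,x_4]$, I would argue that the robot forced to stay near $S_{00}$ must in fact occupy the common intersection $[x_1,x_4]$; the cleanest route is to apply Observation~\ref{obs:nec1} to a point $x\in S_{00}$ and note that being inside $R(x)$ for all such $x$ simultaneously means being inside $[x_1,x_4]$. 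For claim~(2), I would combine claim~(1) with the left/right ordering: because some robot is always in $[x_1,x_4]$ and $r_1$ is always left of $r_2$, whenever $r_1$ ventures into $[0,x_1)$ it is $r_2$ that must be holding $[x_1,x_4]$; symmetrically for $(x_4,1]$. Thus $r_2$ never enters $[0,x_1)$ and $r_1$ never enters $(x_4,1]$, which is exactly the statement. Claim~(3) then follows: a point $x\in[0,x_1)$ is visited only by $r_1$ by claim~(2), yet $r_1$ must also visit the endpoint $y_1=0$ (Observation~\ref{obs:order}); since the unique robot servicing $x$ must reach $0$, feasibility forces $0\in R(x)$. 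The symmetric argument with $r_2$ and $y_n=1$ gives $1\in R(x)$ for $x\in(x_4,1]$.

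For claim~(4), I would work from the definitions of $x_2,x_3$ as the leftmost and rightmost points of $S_{00}$ and the relation $x_1=\alpha(x_4-x_1)$. Since $x_2,x_3\in S_{00}\subset[x_1,x_4]$, we have $x_1\le x_2$ and $x_3\le x_4$; the ordering facts $x_1\le x_2<x_4$ and $x_1<x_3\le x_4$ were already noted just before the lemma. The inequality $x_2-x_1\le x_4-x_1$ is then immediate from $x_2\le x_4$. The more delicate inequality $x_4-x_3\le x_3-x_1$ says that the rightmost $S_{00}$-point $x_3$ is at least as far from $x_1$ as it is from $x_4$, i.e. $x_3$ lies in the right half of $[x_1,x_4]$; I expect to derive this from the fact that $x_3\in S_{00}$ means $1\notin R(x_3)$, so $x_3+\tfrac{I(x_3)}{2}<1$, combined with $x_3\in\bigcap_{x\in S_{00}}R(x)=[x_1,x_4]$ and the definition of $x_4$ as the right endpoint of that intersection (so $x_4\le x_3+\tfrac{I(x_3)}{2}$), together with the normalizing assumption $x_1\le 1-x_4$.

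The main obstacle I anticipate is claim~(4), and specifically pinning down $x_4-x_3\le x_3-x_1$ cleanly. The intuition is that $x_3$ is the rightmost $S_{00}$-point and $x_4$ is no farther right than $x_3+\tfrac{I(x_3)}{2}$, so $x_4-x_3\le \tfrac{I(x_3)}{2}$, while the left slack $x_3-x_1$ should be controllable via the feasibility constraint and the assumption $x_1\le 1-x_4$. Making this quantitatively airtight is where the real work lies; claims~(1)--(3) are essentially structural consequences of the ordering of the robots and the always-occupied interval $[x_1,x_4]$, and I expect those to go through with short arguments.
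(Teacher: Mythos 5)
Your handling of claims (1)--(3) and of the second inequality of (4) is essentially the paper's own argument: (1) is Observation~\ref{obs:nec1} applied to the points of $S_{00}$ together with $[x_1,x_4]=\bigcap_{x\in S_{00}}R(x)$ (the paper is exactly as brief as you are about the ``same robot in all ranges simultaneously'' step); (2) follows from (1) plus the ordering of Observation~\ref{obs:order}; (3) follows from (2) plus the fact that the unique robot serving $[0,x_1)$ must also reach $0$, so any $x$ there has waiting time at least $2x$; and $x_2-x_1\le x_4-x_1$ is indeed immediate from $x_2\le x_4$ as the statement is written.

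The genuine gap is in the first inequality of (4), $x_4-x_3\le x_3-x_1$, precisely where you anticipated difficulty: the ingredients you propose cannot close it. You want to use only (i) $x_4\le x_3+\frac{I(x_3)}{2}$, (ii) $x_3+\frac{I(x_3)}{2}<1$, and (iii) the normalization $x_1\le 1-x_4$. These are jointly consistent with a violation of the inequality: take $x_1=0.3$, $x_3=0.32$, $x_4=0.4$, $I(x_3)=0.2$, so $R(x_3)=[0.22,0.42]\supseteq[x_1,x_4]$, condition (ii) reads $0.42<1$, condition (iii) reads $0.3\le 0.6$, and yet $x_4-x_3=0.08>0.02=x_3-x_1$. (Indeed, chasing your sketch, (i) and (iii) yield the conclusion only when $x_3\ge\frac12$, which is not guaranteed.) What rules such a configuration out is neither $R(x_3)$ nor the normalization --- the paper's proof of (4) never uses $x_1\le 1-x_4$ --- but the \emph{other} points of $S_{00}$, specifically the point whose range realizes the left endpoint $x_1$. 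Concretely, let $x^*\in S_{00}$ satisfy $x^*-\frac{I(x^*)}{2}=x_1$ (it exists since $0\notin R(x)$ for $x\in S_{00}$, so $x_1=\max_{x\in S_{00}}\left(x-\frac{I(x)}{2}\right)$). Since $x_4\le x^*+\frac{I(x^*)}{2}=2x^*-x_1$, we get $x^*\ge\frac{x_1+x_4}{2}$, hence $x_3\ge x^*\ge\frac{x_1+x_4}{2}$, which is the desired inequality; in the numerical configuration above this forces $x^*-\frac{I(x^*)}{2}\le 0.24<0.3$, a contradiction. The paper packages the same idea as a contradiction quantified over all of $S_{00}$: if $x_3<\frac{x_1+x_4}{2}$, then every $x\in S_{00}$ satisfies $x\le x_3$ and $x_4\le x+\frac{I(x)}{2}$, forcing $I(x)>x_4-x_1$ and hence $x-\frac{I(x)}{2}<x_1$ for all $x\in S_{00}$, contradicting that $x_1$ is the leftmost point of $\bigcap_{x\in S_{00}}R(x)$. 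Either way, the argument must exploit points of $S_{00}$ other than $x_3$; working with $x_3$ alone cannot succeed.
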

\begin{proof}
The proof of (1) is a direct consequence of Observation \ref{obs:nec1} and the fact that $[x_1,x_4]$ is the intersection of the ranges of all of the points of $S_{00}$. 

During the execution of $\mathcal{A}$ a robot needs to visit 0 and 1. Also, by (1) we know that there is always a robot inside $[x_1,x_4]$. Therefore while the robot $r_2$ is traversing $(x_4,1]$ the robot $r_1$ has to stay inside $[x_1,x_4]$, and while robot $r_1$ is traversing $[0,x_1)$, the robot $r_2$ has to stay inside $[x_1,x_4]$. This implies that $r_1$ never passes $x_4$ and $r_2$ never passes $x_1$. This proves (2). Part (3) follows directly from (2).

We now prove the first inequality of (4). Suppose to the contrary that $x_4-x_3>x_3-x_1$, and thus $x_3<\frac{x_1+x_4}{2}$. For all $x\in S_{00}$ we have that $x_4\in R(x)$. Therefore for all $x\in S_{00}$, $x_4\leq x+\frac{I(x)}{2}$. Moreover $x_3$ is the rightmost point of $S_{00}$, hence $x\leq x_3< \frac{x_1+x_4}{2}$. Consequently
$x_4\leq x+\frac{I(x)}{2}\leq x_3+\frac{I(x)}{2}<\frac{x_1+x_4}{2}+\frac{I(x)}{2}.$
This implies that $I(x)> \frac{x_4-x_1}{2}$. So for all $x\in S_{00}$ we have 
$x-\frac{I(x)}{2}\leq x-\frac{x_4-x_1}{2}<\frac{x_1+x_4}{2}-\frac{x_4-x_1}{2}=x_1.$
Therefore there is a point $y\in (0,1)$ such that for all $x\in S_{00}$, $x-\frac{I(x)}{2}\leq y< x_1$. Hence $y\in \bigcap_{x\in S_{00}} R(x)$ and $y< x_1$. This contradicts the fact that $x_1$ is the leftmost point of the intersection of the ranges of all the points of $S_{00}$.
The proof of the second inequality of (4) follows by an analogous argument.
\qed
\end{proof}
\begin{lemma}
\label{lem:idle-time}
If there is a feasible solution for patrolling with two robots then the idle time of the points of $S$ satisfy the following inequalities.
$$
I(x)\geq\left\{
\begin{array}{ll}
\max\{2x,2(1-x-x_4+x_1),x_4-x_1\} &, x\in [0,x_1) \\
2\max\{x_4-x, x-x_1\} &, x\in [x_1,x_4] \\
\max\{2(1-x),2(x-x_4+x_1),x_4-x_1\} &, x\in (x_4,1]\\
\end{array}
\right.
$$
\end{lemma}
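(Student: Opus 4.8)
The plan is to prove the three displayed bounds region by region, exploiting the left--right symmetry of the statement. Reflecting the instance through $x \mapsto 1-x$ (and carrying $I$ along) sends $S_{00}$ to itself and the critical points to $x_1 \mapsto 1-x_4$, $x_4 \mapsto 1-x_1$; under this reflection the region $(x_4,1]$ is mapped onto the region $[0,x_1)$ of the reflected instance, and the third line of the claimed inequality is carried exactly onto the first. Hence it suffices to establish the bounds for $x\in[0,x_1)$ and for $x\in[x_1,x_4]$, the bounds for $(x_4,1]$ following by symmetry. (The reflection may break the normalisation $x_1\le 1-x_4$, so I will be careful not to use that convention in the two cases I treat directly.) I also note that each middle bound $2\max\{x_4-x,x-x_1\}$ is a single inequality: it asks only for the larger of the two quantities, so for $x$ left of the midpoint $(x_1+x_4)/2$ I must produce $I(x)\ge 2(x_4-x)$, and for $x$ right of it $I(x)\ge 2(x-x_1)$.

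Several of the bounds are \emph{static}, i.e.\ immediate from range containments. For $x\in[0,x_1)$ Lemma~\ref{lem:nec2}(3) gives $0\in R(x)$, whence $x-\tfrac{I(x)}{2}\le 0$ and $I(x)\ge 2x$; symmetrically $I(x)\ge 2(1-x)$ on $(x_4,1]$. For $x\in S_{00}$ we have $R(x)\supseteq\bigcap_{z\in S_{00}}R(z)=[x_1,x_4]$ (Definition~\ref{def: expanding}), so $x-\tfrac{I(x)}{2}\le x_1$ and $x+\tfrac{I(x)}{2}\ge x_4$, giving both middle bounds at once. Finally, a middle point $x\notin S_{00}$ has $0\in R(x)$ or $1\in R(x)$; in the first case $I(x)\ge 2x\ge 2(x-x_1)$ (as $x_1\ge 0$), and in the second $I(x)\ge 2(1-x)\ge 2(x_4-x)$ (as $x_4\le 1$), so in each case the cheap side of the middle maximum is already covered.

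The remaining bounds are \emph{dynamic} and rest on one structural observation. Fix a feasible schedule with $r_1$ left of $r_2$ (Observation~\ref{obs:order}). The set of times at which $r_1<x_1$ and the set at which $r_2>x_4$ are disjoint: if both held, no robot would lie in $[x_1,x_4]$, contradicting Lemma~\ref{lem:nec2}(1). Moreover $y_n=1$ is visited only by $r_2$ and only while $r_2>x_4$, so there are infinitely many maximal ``right excursions'' on which $r_2$ leaves $x_4$, reaches $1$, and returns; each lasts at least $2(1-x_4)$, and throughout it $r_1\in[x_1,x_4]$ and so cannot enter $[0,x_1)$. To get the bound $I(x)\ge 2(1-x-x_4+x_1)$ for $x\in[0,x_1)$, take such an excursion $[\tau_1,\tau_2]$: by continuity $r_1(\tau_1)\ge x_1$ and $r_1(\tau_2)\ge x_1$, so the last visit of $r_1$ to $x$ before $\tau_1$ precedes $\tau_1$ by at least $x_1-x$ (the time to travel from $x$ to $[x_1,x_4]$) and the first visit after $\tau_2$ follows $\tau_2$ by at least $x_1-x$. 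Since $x$ is untouched during $[\tau_1,\tau_2]$, the gap between these two visits is at least $2(x_1-x)+(\tau_2-\tau_1)\ge 2(x_1-x)+2(1-x_4)=2(1-x-x_4+x_1)$, whence $w(x)\ge 2(1-x-x_4+x_1)$ and $I(x)\ge w(x)$.

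The main obstacle is the last group of bounds, which all have the flavour of a forced round trip to the \emph{far} critical endpoint: the outer bound $I(x)\ge x_4-x_1$ on $[0,x_1)$ (and its mirror on $(x_4,1]$), and the hard halves of the middle bound, namely $I(x)\ge 2(x_4-x)$ for a non-$S_{00}$ point $x$ in the left half of $[x_1,x_4]$ with $0\in R(x)$ (and symmetrically $2(x-x_1)$ on the right half). The method I intend is the same excursion analysis, but now the quantity that forces the robot servicing $x$ away from $x$ is the obligation to \emph{visit} the extreme points of $S_{00}$: during a deep right excursion of $r_2$ the rightmost point $x_3\in S_{00}$ must be visited by $r_1$, and Lemma~\ref{lem:nec2}(4) places $x_3$ at or beyond the midpoint, $x_3\ge\tfrac{x_1+x_4}{2}$, forcing $r_1$ across the bulk of $[x_1,x_4]$ before it may return. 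The delicate points are (i) correctly choosing the two consecutive visits of $x$ whose gap is being bounded, given that $x$ may be served alternately by both robots, and (ii) bookkeeping the temporal overlap between $r_1$'s traversal of $[x_1,x_4]$ and $r_2$'s excursion so as to extract precisely the stated constants; this case analysis, rather than any single clever step, is where the real work lies, and it is here that one must be most careful that the claimed inequalities are not being overstated for points sitting far from the endpoint they reference.
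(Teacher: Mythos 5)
The parts of your argument that you actually carry out are correct and coincide with the paper's own steps: the static bound $I(x)\ge 2x$ on $[0,x_1)$, the containment argument for $x\in S_{00}$, the cheap sides of the middle bound, the excursion argument giving $I(x)\ge 2(1-x-x_4+x_1)$, and the reflection reducing $(x_4,1]$ to $[0,x_1)$. But the proposal has a genuine gap, and it sits precisely where you defer ``the real work'': neither the bound $I(x)\ge x_4-x_1$ on the outer regions nor the hard halves of the middle bound are ever proved. For the former you missed a one-line static argument, which is how the paper does it: by Lemma~\ref{lem:nec-condition}, $x_3\in R(x)$ for \emph{every} $x\in S$, so for $x<x_1$ we get $\frac{I(x)}{2}\ge x_3-x\ge x_3-x_1$, and combining with $x_4-x_3\le x_3-x_1$ (part (4) of Lemma~\ref{lem:nec2}) yields $x_4-x_1=(x_4-x_3)+(x_3-x_1)\le I(x)$. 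Your intended dynamic substitute, by contrast, cannot be completed as described: nothing forces $r_1$ to visit $x_3$ \emph{during} a right excursion of $r_2$, since $x_3$ may be served by $r_2$ immediately before it passes $x_4$ and immediately after it returns, so the premise ``$x_3$ must be visited by $r_1$'' is unjustified.

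For the hard halves of the middle bound, your closing caution is exactly on point: for $x\in[x_1,x_4]\setminus S_{00}$ these inequalities are \emph{false}, so no completion of your plan (or of any plan) exists, and the paper's own proof is erroneous here --- it asserts that $x\in[x_1,x_4]$ implies $x_1,x_4\in R(x)$, which is justified only when $x\in S_{00}$. A counterexample: take points $0,\,0.3,\,0.5,\,1$ with $I(0)=I(1)=1$, $I(0.3)=0.6$, $I(0.5)=0.5$. Then $S_{00}=\{0.5\}$ and $[x_1,x_4]=R(0.5)=[0.25,0.75]$. The schedule in which $r_1$ zigzags on $[0,0.5]$ and $r_2$ zigzags on $[0.5,1]$, both with period $1$, starting at $0$ and $0.5$ respectively and moving right, is feasible: their visits to $0.5$ alternate every $0.5$ time units, so $w(0)=w(1)=1$, $w(0.3)=0.6$, $w(0.5)=0.5$. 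Yet the middle line of Lemma~\ref{lem:idle-time} would require $I(0.3)\ge 2\max\{0.75-0.3,\,0.3-0.25\}=0.9>0.6$. Hence the lemma's middle case is valid only for $x\in S_{00}$ (which your static argument already covers); for the remaining points of $[x_1,x_4]$ the correct substitute, again via Lemma~\ref{lem:nec-condition}, is $I(x)\ge 2\max\{|x-x_2|,|x-x_3|\}$ combined with $I(x)\ge 2x$ or $I(x)\ge 2(1-x)$, and the analyses that invoke the lemma downstream need to be patched accordingly.
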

\begin{proof}
Let $\mathcal{A}$ be a feasible solution and $x\in S$. 

First assume that $x\in [0,x_1)$. 
By (2) of Lemma~\ref{lem:nec2}, in $\mathcal{A}$ the points of $[0,x_1)$ are only visited by $r_1$ and $0\in R(x)$. Thus, $I(x)\geq w_{\mathcal{A}}(x)\geq 2x$. 
Moreover robot $r_1$ has to stay inside the interval $[x_1,x_4]$ for at least $2(1-x_4)$ while the robot $r_2$ is traversing the interval $(x_4,1]$ to visit 1. The time length for $r_1$ to traverse from $x$ to $x_1$, stay for at least $2(1-x_4)$ inside $[x_1,x_4]$, and then traverse from $x_1$ to $x$ is at least $2[(x_1-x)+(1-x_4)]$. Therefore, $I(x)\geq w_{\mathcal{A}(x)}\geq 2[(x_1-x)+(1-x_4)]$.
On the other hand, by Lemma \ref{lem:nec-condition}, we know that $x_3\in R(x)$, and thus $\frac{I(x)}{2}\geq x_3-x\geq x_3-x_1.$ By (3) of Lemma~\ref{lem:nec2}, $x_4-x_3\leq x_3-x_1\leq \frac{I(x)}{2}$. Therefore, 
$x_4-x_1= (x_4-x_3)+(x_3-x_1)\leq \frac{I(x)}{2}+\frac{I(x)}{2}= I(x).$
By the above discussion, and for all $x\in [0,x_1)$, we have
$I(x)\geq \max\{2x,2(1-x-x_4+x_1),x_4-x_1\}.$
A similar argument shows that for $x\in(x_4,1]$ we have that 
$I(x)\geq\max\{2x,2(1-x-x_4+x_1),x_4-x_1\}.$

Now assume that $x\in [x_1,x_4]$. Then $x_1, x_4\in R(x)$, and therefore $x-\frac{I(x)}{2}\leq x_1\leq x_4\leq x+\frac{I(x)}{2}$. This implies that $2(x-x_1)\leq I(x)$ and $2(x_4-x)\leq I(x)$. So for all $x\in [x_1,x_4]$ we have
$I(x)\geq\max\{x-x_1, x_4-x\}.$
\end{proof}
\ignore{
In sequel we present two algorithms for patrolling with two robots in the presence of points of type $S_{00}$. We will prove that the feasibility ratio of these algorithms are functions of $\alpha$. Recall that $\alpha$ is a positive real number for which $x_1=\alpha(x_4-x_1)$. 
}

\subsection{$(1+2\alpha)$-Approximate Patrolling Schedules (Proof of Lemma~\ref{lem: 1+2a-approx})}
\label{sec: 1+2a-approx}

Given a feasible $\alpha$-expanding instance of \patrol\, and using its critical points as in Definition~\ref{def: expanding}, we propose the following algorithm. 
\vspace{-0.5cm}
\begin{algorithm}[H]
\caption{}
\label{alg:partition-patrolling}
\begin{algorithmic}[1]
\State{Robot $r_1$ starts anywhere in $[0,x_3]$, and robot $r_2$ starts anywhere in $[x_3,1]$.}
\State{Repeat forever}
	\Indent
	\State{Robot $r_1$ zigzags inside $[0,x_3]$ and robot $r_2$ zigzags inside $[x_3,1]$.}
	\EndIndent
\end{algorithmic}
\end{algorithm}
\vspace{-0.7cm}
Next we show that Algorithm~\ref{alg:partition-patrolling} is $(1+2\alpha)$-feasible, effectively proving Lemma~\ref{lem: 1+2a-approx}. For this we analyze the waiting time $w(x)$ for all points $x\in S$. 

Assume that $x\in [0,x_1)$. By Lemma \ref{lem:nec-condition}, we know that $x_3\in R(x)$. Moreover by (3) of Lemma~\ref{lem:nec2}, $0\in R(x)$. Since $r_1$ zigzags inside $[0,x_3]$ then $w(x)\leq I(x)$. 

Similarly, for $x\in (x_3,1]$, by Lemma \ref{lem:nec-condition} and Lemma~\ref{lem:nec2} we have $\{x_3, 1\}\subset R(x)$. Since $r_2$ zigzags inside $[x_3,1]$ then $w(x)\leq I(x)$. 

Finally, let $x\in [x_1,x_3]$. First assume that $x<x_3$. Then in Algorithm \ref{alg:partition-patrolling} the point $x$ is only visited by $r_1$. Since $r_1$ zigzags inside $[0,x_3]$ we have that
$w(x)=2\max\{x, x_3-x\}.$
We now compute the feasibility ratio. Clearly for the points $x\in [0,x_1)\cup (x_3,1]$ we have that $\frac{w(x)}{I(x)}\leq 1$. So when $x\in [x_1,x_3]$, then by Lemma \ref{lem:idle-time}
$\frac{w(x)}{I(x)}\leq \frac{\max\{x, x_3-x\}}{\max\{x-x_1, x_4-x\}}.$
First let $\max\{x-x_1, x_4-x\}=x_4-x$. Then $x\leq \frac{x_1+x_4}{2}$. If $\max\{x, x_3-x\}=x_3-x$, as $x_3\leq x_4$ we have that $\frac{w(x)}{I(x)}\leq 1$. If , on the other hand, $\max\{x, x_3-x\}=x$, then we have 
\begin{align*}
\frac{w(x)}{I(x)}&\leq\frac{x}{x_4-x}\leq \frac{\frac{x_1+x_4}{2}}{x_4-\frac{x_1+x_4}{2}}\leq \frac{x_1+x_4}{x_4-x_1}\\
&=\frac{(x_4-x_1)+2x_1}{x_4-x_1}=1+\frac{2x_1}{x_4-x_1}\\
&=1+\frac{2x_1}{\frac{x_1}{\alpha}}   \hspace{.7cm}[\mbox{Using }x_1=\alpha(x_4-x_1)]\\
&= 1+2\alpha.
\end{align*}
Now let $\max\{x-x_1, x_4-x\}=x-x_1$. Then $x\geq \frac{x_1+x_4}{2}$. Moreover by (4) of Lemma~\ref{lem:nec2}, we have $x_3\geq \frac{x_1+x_4}{2}$. Therefore $x_3-x\leq x-x_1$. If $\max\{x,x_3-x\}=x_3-x$ then $\frac{w(x)}{I(x)}\leq 1$. So assume that $\max\{x,x_3-x\}=x$, in which case 
\begin{align*}
\frac{w(x)}{I(x)}&\leq\frac{x}{x-x_1}\leq \frac{x-x_1+x_1}{x-x_1}=1+\frac{x_1}{x-x_1}\\
&\leq 1+\frac{x_1}{\frac{x_1+x_4}{2}-x_1}=1+\frac{2x_1}{x_4-x_1}=1+\frac{2x_1}{\frac{x_1}{\alpha}}= 1+2\alpha.
\end{align*}
Therefore, Algorithm \ref{alg:partition-patrolling} is a $(1+2\alpha)$-approximation algorithm. Our analysis above is tight. For details, see Appendix~\ref{sec: tight Algo1}.

\subsection{$\frac{2+\alpha}{1+\alpha}$-Approximate Patrolling Schedules (Proof of Lemma~\ref{lem: (2+a)/(1+a)-approx})}
\label{sec: (2+a)/(1+a)-approx}

The distributed algorithm that achieves feasibility performance $\frac{2+\alpha}{1+\alpha}$ is quite elaborate. At a high level, the two robots maintain some distance that never drops below a certain carefully chosen threshold. During the execution of the patrolling schedule, there will always be some robot patrolling (zigzaging within) a certain subinterval defined by critical points of the given instance. When the robots move towards each other, and their distance reaches the certain threshold, then robots exchange roles; the previously zigzaging robot abandons the subinterval and goes to patrol its extreme point, while the other robot starts zigzaging within the subinterval. The formal description of our algorithm follows. The reader may also consult Figure \ref{fig:algo-d}. 
\vspace{-0.3cm}
\begin{figure}[!htb]
\begin{center}
\includegraphics[width=5cm]{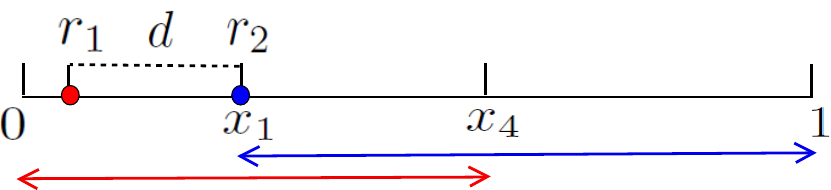}
\end{center}
\caption{The red arrow determines the patrolling area of $r_1$ and the blue arrow determines the patrolling area of $r_2$.}
\label{fig:algo-d}
\end{figure}

\vspace{-0.3cm}
\begin{algorithm}[H]
\caption{}
\label{alg:sqrt(3)}
\begin{algorithmic}[1]
\State{Let $d=\frac{1}{1+\alpha}\min\{x_1, x_4-x_1\}.$}
\State{Robot $r_1$ starts at $x_1-d$ and robot $r_2$ at $x_1$.}
\State{Repeat forever}
	\Indent
	\State{\textit{Patrolling Schedule of $r_1$:}}
	\While{$r_1$ is inside the interval $[x_1,x_4]$ and the distance between the locations of $r_1$ and $r_2$ is at least $d$}
		\State{Zigzag between points $x_1$ and $x_4$.}
	\EndWhile
	\State{Visit point 0, then visit point $x_1$, and then go to step 5.}
 	\EndIndent
 	
	\Indent
	\State{\textit{Patrolling Schedule of $r_2$:}}
	\While{$r_2$ is inside the interval $[x_1,x_4]$ and the distance between the locations of $r_2$ and $r_1$ is at least $d$}
		\State{Zigzag between points $x_1$ and $x_4$.}
	\EndWhile
	\State{Visit point 1, then visit point $x_4$, and then go to step 9.}	
 	\EndIndent
\end{algorithmic}
\end{algorithm}
\vspace{-0.7cm}
Note that each robot has an explicit segment in which the points are visited by only that robot, {\em i.e.} $[0,x_1)$ is the explicit segment of $r_1$ and $(x_4,1]$ is the explicit segment of $r_2$. The trajectories of the robots overlap at $[x_1,x_4]$ where the points are visited by both $r_1$ and $r_2$. The movements of the robots have two states: zigzagging inside $[x_1,x_4]$ and traversing their explicit segments twice.   
More precisely, once a robot enters $[x_1,x_4]$ it zigzags inside $[x_1,x_4]$ until the other robot is at distance $d$. Then it leaves $[x_1,x_4]$, traverses its explicit segment twice, and the same process repeats perpetually. 

Note that by the definition of $d$, we know that 
$\min\{x_1, x_4-x_1, 1-x_4\}\geq d.$
Therefore, the original placement of $r_1$ at $x_1-d$ is compatible with Algorithm~\ref{alg:sqrt(3)}.
The remaining of the section is devoted to proving that Algorithm~\ref{alg:sqrt(3)} is $\frac{2+\alpha}{1+\alpha}$-approximate, effectively proving Lemma~\ref{lem: (2+a)/(1+a)-approx}. As a first step, we calculate the worst case waiting times $w(x)$ of all points in $S$. 
\begin{lemma}
\label{lem:waiting-time}
The waiting times of points in $S$ for Algorithm \ref{alg:sqrt(3)} are as follows.
$$
w(x)
\left\{
\begin{array}{ll}
= 2\max\{x,1-x-d\} &, x\in [0,x_1) \\
\leq 2\max\{x-x_1, x_4-x\}+d &, x\in [x_1,x_4] \\
= 2\max\{1-x,x-d\} &, x\in (x_4,1]
\end{array}
\right.
$$
\end{lemma}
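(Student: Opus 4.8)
The plan is to first determine the steady-state motion produced by Algorithm~\ref{alg:sqrt(3)}, show it is periodic, compute its period, and then read off the largest gap between consecutive visits separately in each of the three regions $[0,x_1)$, $[x_1,x_4]$, $(x_4,1]$.

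First I would make the ``dance'' precise. By construction $0<d\le\min\{x_1,x_4-x_1,1-x_4\}$, and by Observation~\ref{obs:order} robot $r_1$ stays left of $r_2$; I would argue that at every instant exactly one robot (the \emph{patroller}) zigzags inside $[x_1,x_4]$ while the other performs an \emph{excursion} to its extreme point ($0$ for $r_1$, $1$ for $r_2$) and back, the two robots always being at distance $\ge d$. A role swap (handoff) happens exactly when the returning robot and the patroller reach distance $d$ moving towards each other. Tracking the configurations at the two handoff types shows they recur, so the schedule is periodic; let $T$ be its period. The patrol phase of one robot occupies the same clock interval as the excursion phase of the other, so $T=\mathrm{exc}_1+\mathrm{exc}_2$, where $\mathrm{exc}_1=\ell_1+e_1$ and $\mathrm{exc}_2=(1-\ell_2)+(1-e_2)$ are the distances travelled during the two excursions ($\ell_i,e_i$ the departure and re-entry positions of $r_i$). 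Since $r_2$ sits exactly $d$ to the right of $r_1$ at each handoff, $e_2=\ell_1+d$ and $\ell_2=e_1+d$, whence $T=\ell_1+e_1+2-\ell_2-e_2=2-2d=2(1-d)$.

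Next the two outer regions, which are clean. By Lemma~\ref{lem:nec2}(2) the interval $[0,x_1)$ is visited only by $r_1$, and only during its excursion: going down it passes $x$ once and coming back up once, so these two visits are $2x$ apart, while the complementary gap (from the up-visit to the next period's down-visit) is $T-2x$. Hence $w(x)=\max\{2x,\,T-2x\}=2\max\{x,\,1-x-d\}$. The symmetric argument for $(x_4,1]$, with $1-x$ in place of $x$, gives $w(x)=2\max\{1-x,\,x-d\}$.

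Finally the interior region $[x_1,x_4]$, the only place where the bound is one-sided and where the real work lies. Here $x$ is visited by whichever robot is currently the patroller. Within a single patrol phase the patroller zigzags between $x_1$ and $x_4$, so two consecutive visits to $x$ differ by $2(x-x_1)$ or $2(x_4-x)$, i.e.\ by at most $2\max\{x-x_1,x_4-x\}$. The only way to exceed this is a gap that straddles a handoff, and this is the main obstacle: one must bound the delay between the departing patroller's last visit to $x$ and the arriving patroller's first visit. I would do this by locating both visits relative to the handoff instant, at which the robots are at distance exactly $d$ and leave/enter through opposite ends of $[x_1,x_4]$; a short computation then shows the straddling gap exceeds a zigzag half-swing by at most $d$, yielding $w(x)\le 2\max\{x-x_1,x_4-x\}+d$. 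Care is needed because the patroller designation can switch while the arriving robot is already physically inside $[x_1,x_4]$, so the position bookkeeping around handoffs (and the instance-dependence of the departure positions $\ell_1,\ell_2$) is delicate; this handoff analysis, together with verifying periodicity and checking that the initial transient produces no larger gap than the periodic regime, is where essentially all the difficulty sits.
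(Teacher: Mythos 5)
Your proposal is correct, and its skeleton (three regions; excursion accounting outside $[x_1,x_4]$; a handoff-straddling ``$+d$'' argument inside) coincides with the paper's proof; in particular your middle-interval argument is essentially identical to the paper's, where same-robot gaps are at most $2\max\{x-x_1,x_4-x\}$ and a gap across a handoff costs at most $d$ extra because the relieving robot sits exactly $d$ beyond the point where the departing robot turned away. The genuine difference is in the outer intervals. The paper decomposes the gap as $(x_1-x)+\text{(zigzag phase)}+(x_1-x)$ and asserts the zigzag phase lasts exactly $2(1-x_1-d)$ because every handoff allegedly occurs with the entering robot exactly at $x_1$ and the leaving robot at $x_1+d$; that assertion is too strong, since it would force the excursion length $1-x_1-d$ to be an integer multiple of $x_4-x_1$, and in general the returning robot catches the patroller mid-interval, so the patroller departs from an interior point. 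Your accounting --- period $T=\mathrm{exc}_1+\mathrm{exc}_2=2(1-d)$ via the per-handoff identities $e_2=\ell_1+d$, $\ell_2=e_1+d$, then gap $=T-2x$ --- uses only the fact that the robots are exactly $d$ apart at each handoff, not where the handoff happens, which is precisely the robustness needed to make the stated equalities true. In fact you can go one step further and dissolve the periodicity and transient worries you flag at the end: writing consecutive departure/re-entry positions of $r_1$ as $\ell_1,e_1,\ell_1'$, the up-visit-to-down-visit gap at $x\in[0,x_1)$ is $\bigl[(\ell_1+e_1)+\bigl(2-(e_1+d)-(\ell_1'+d)\bigr)\bigr]+(\ell_1'-x)-(\ell_1+x)=2(1-x-d)$, with every handoff position cancelling; so no recurrence of configurations is needed at all, only the alternation of roles and the distance-$d$ handoff rule --- the same informal dynamical fact that the paper's proof also takes for granted.
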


\begin{proof}
Recall that $x_1\leq 1-x_4$, and in particular $\min\{x_1, x_4-x_1, 1-x_4\}\geq d$. 

\begin{description}
\item[Case $0\leq x<x_1$:] Point $x$ is only visited by robot $r_1$. We now calculate the time interval between two consecutive visitations of $x$ by $r_1$. We distinguish two subcases. 

{\bf First} consider the subcase where $r_1$ is moving to the left when it visits $x$. Before $r_1$ visits $x$ again, it has to visit 0 and then return to $x$. Therefore, the time between the two visitations of $x$ is $2x$.  

{\bf Second} consider the subcase in which $r_1$ is moving to the right when it visits $x$. Before $r_1$ visits $x$ again, it has to visit $x_1$ (i.e. enter interval $[x_1, x_4]$), zigzag between points $x_1$ and $x_4$ until its distance to the other robots becomes $d$, and then $r_1$ exits the interval $[x_1,x_4]$ and return to $x$. Below we compute the total time between these two visitations of $x$ by $r_1$. 
\begin{itemize}
\item[(1a):] $r_1$ traverses from $x$ to $x_1$: it takes $x_1-x$. 
\item[(1b):] $r_1$ zigzags inside $[x_1,x_4]$: 
at the time that $r_1$ arrives at $x_1$ and starts  zigzaging inside $[x_1,x_4]$, robot $r_2$ is at distance $d$ from $r_1$ and it is moving to the right to visit 1 and return. Also, at the time that $r_1$ arrives at $x_1$ to exit the interval $[x_1,x_4]$, the distance between $r_1$ and $r_2$ is $d$ and robot $r_2$ is moving to the left to zigzag inside the interval $[x_1,x_4]$. Therefore, the time $r_1$ spends inside the interval $[x_1,x_4]$ is equal to the time that $r_2$ spends to traverse from $x_1+d$ to 1 and return to $x_1+d$, which is $2(1-x_1-d)$.
\item[(1c):] $r_1$ traverses from $x_1$ to $x$: it takes $x_1-x$.
\end{itemize}
Using (1a,1b,1c) above, we conclude that the total time between two consecutive visitations of $x$ by $r_1$ is $2(1-x-d)$.

Taking into consideration both subcases, the overall (worst case) waiting time of $x$ is $2\max\{x,1-x-d\}$.

\item[Case $x_4<x\leq 1$:] 
 The analysis is analogous to the previous case. 

\item[Case $x_1\leq x\leq x_4$:] 
Point $x$ is visited by both $r_1$ and $r_2$. We consider two subcases
\begin{itemize}
\item[(1)] The two consecutive visits of $x$ are by the same robot $r_1$ or $r_2$: this case occurs when either of $r_1$ or $r_2$ zigzags inside the interval $[x_1,x_4]$. Therefore $w(x)=2\max\{x_4-x, x-x_1\}$.
\item[(2)] The two consecutive visits of $x$ are by different robots $r_1$ and $r_2$: this case occurs when one robot is exiting the interval $[x_1,x_4]$ and the other one is entering it. 

First suppose that $r_1$ visits $x$ and the next visit of $x$ is performed by $r_2$. The worst waiting time in this case occurs when $r_1$ is about to visit $x$ but the distance between $r_1$ and $r_2$ reduces to $d$ and so $r_1$ turns away from $x$. Then $r_2$ visits $x$ after at most $d$ time steps. Note that since $x_1\geq d$ the visit of $x$ by $r_2$ is guaranteed. Therefore $w(x)\leq 2(x-x_1)+d$. Now assume that $r_2$ visits $x$ and the next visit of $x$ is performed by $r_1$. By a similar discussion we have that $w(x)\leq 2(x_4-x)+d$. This implies that $w(x)\leq 2\max\{x-x_1, x_4-x\}+d$.
\end{itemize}
By Subcases 1,2 above we conclude that $w(x)\leq 2\max\{x-x_1, x_4-x\}+d$, for all $x\in [x_1,x_4]$.
\end{description}
~
\qed
\end{proof}

The proof of Lemma~\ref{lem: (2+a)/(1+a)-approx} follows by upper bounding $\max_{x\in S}\left\{\frac{w(x)}{I(x)}\right\}$. Using Lemma~\ref{lem:idle-time} and Lemma~\ref{lem:waiting-time}, we see that the approximation ratio of Algorithm~\ref{alg:sqrt(3)} is no more than 
\begin{equation}
\label{equa: feas ratio}
\frac{w(x)}{I(x)}
\leq
\left\{
\begin{array}{ll}
\frac{2\max\{x,1-x-d\}}{\max\{2x,2(1-x-x_4+x_1),x_4-x_1\}} &, x\in [0,x_1) \\
\frac{2\max\{x-x_1, x_4-x\}+d}{2\max\{x_4-x, x-x_1\}} &, x\in [x_1,x_4] \\
\frac{2\max\{1-x,x-d\}}{\max\{2(1-x),2(x-x_4+x_1),x_4-x_1\}} &, x\in (x_4,1]
\end{array}
\right.
\end{equation}
Using that $d=\frac{\min\{x_1,x_4-x_1\}}{1+a}$, and the fact that the given instance is $\alpha$-expanding, 
i.e. that $x_1=\alpha(x_4-x_1)$, and after some tedious and purely algebraic calculations, we see that $\frac{w(x)}{I(x)} \leq \frac{2+\alpha}{1+\alpha}$ for all $x\in S$, as wanted. Details can be found in %the full paper.
Appendix~\ref{sec: tedious cal for feasibility ratio}.

\section{Conclusion}

The paper investigated the problem of patrolling a line segment by two robots when time-patrolling constraints are placed on the frequency of visitation of all the points of the line. As shown in this study, finding ``efficient'' trajectories that satisfy the requirements or even deciding on their existence for two robots turns out to be a highly intricate problem.  Nothing better is known aside from the $\sqrt 3$-approximation algorithm for two robots on a line presented in this work. The patrolling problem with constraints is also open for more general graph topologies (e.g, cycles, trees, etc.). Further, the case of patrolling with constraints for multiple robots is completely unexplored in all topologies, including for the line segment.

%\bibliographystyle{plain}

%\end{document}

\newpage
\appendix

\section{Tightness Analysis of Algorithm~\ref{alg:partition-patrolling}}
\label{sec: tight Algo1}

For every $\alpha$, we provide a feasible $\alpha$-expanding \patrol\ instance for which the performance of Algorithm~\ref{alg:partition-patrolling} is exactly equal to $1+2\alpha$. The instance is as follows. Choose $x_1\in (0,1)$ such that $2x_1+\frac{x_1}{\alpha}=1$. Consider three points $y_0=0$, $y_1=x_1(1+\frac{1}{2\alpha})$, and $y_2=1$. Let $I(y_0)=I(y_2)=4x_1+\frac{x_1}{\alpha}$ and $I(y_1)=\frac{x_1}{\alpha}$. 

Let $\mathcal{A}$ be a solution as follows. At the beginning, the robot $r_1$ locates at $x_1$ and the robot $r_2$ locates at $y_1$, and both robots move to the right. See Figure \ref{fig:partition-tight}.
\begin{figure}[!htb]
\begin{center}
\includegraphics[width=7cm]{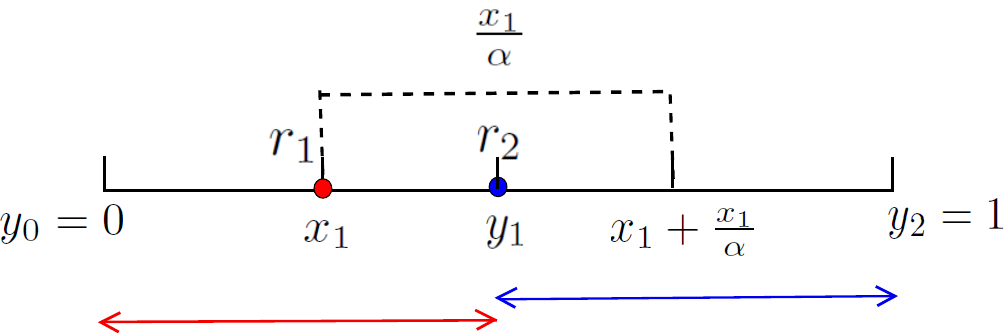}
\end{center}
\caption{The red arrow determines the patrolling area of $r_1$ in $\mathcal{A}$ and the blue arrow determines the patrolling area of $r_2$ in $\mathcal{A}$.}
\label{fig:partition-tight}
\end{figure}
The patrolling segment of $r_1$ is $[y_0,y_1]$ and the patrolling segment of $r_2$ is $[y_1,y_2]$. The robot $r_1$ moves back and forth inside the interval $[y_0,y_1]$ and each time $r_1$ visits $y_1$ stays at $y_1$ for $2x_1$. Similarly the robot $r_2$ moves back and forth inside the interval $[y_1,y_2]$ and each time it visits $y_1$ stays at $y_1$ for $2x_1$. Then $w_{\mathcal{A}}(y_0)=w_{\mathcal{A}}(y_2)=4x_1+\frac{x_1}{\alpha}=I(y_0)=I(y_1)$ and $w_{\mathcal{A}}(y_1)= \frac{x_1}{2\alpha}<I(y_1).$ Therefore $\mathcal{A}$ is a feasible solution. 

Now consider Algorithm \ref{alg:partition-patrolling}. For the above example we have $x_2=x_3=y_1$. Therefore $r_1$ zigzags inside $[0,y_1]$ and $r_2$ zigzags inside $[y_1,1]$. Since the movement of the robots is not coordinated in Algorithm \ref{alg:partition-patrolling} the worst waiting time of $y_1$ in Algorithm \ref{alg:partition-patrolling} is $2x_1+\frac{x_1}{\alpha}$. This implies that 
$$\frac{w(x)}{I(x)}=\frac{x_1(2+\frac{1}{\alpha})}{\frac{x_1}{\alpha}}=1+2\alpha.$$

\section{Tightness Analysis of Algorithm~\ref{alg:sqrt(3)}}
\label{sec: tight Algo2}

For every $\alpha$, we provide a feasible $\alpha$-expanding \patrol\ instance for which the performance of Algorithm~\ref{alg:sqrt(3)} is exactly equal to $\frac{2+\alpha}{1+\alpha}$. For this we consider two cases.

\vspace{.3cm}
\noindent{\bf Case 1. $\alpha< 1$:} Choose $x_1\in (0,1)$ such that $x_1+2\frac{x_1}{\alpha}=1$. Consider four points $y_0=0$, $y_1=x_1+\frac{x_1}{2\alpha}$, $y_2=\frac{2x_1}{\alpha}$, and $y_3=1$. See Figure \ref{fig:algo-d-tight}. Let $I(y_0)=I(y_3)=x_1(2+\frac{3}{\alpha})$, $I(y_1)=\frac{x_1}{\alpha}$, and $I(y_2)=\frac{2x_1}{\alpha}$.

\begin{figure}[!htb]
\begin{center}
\includegraphics[width=8cm]{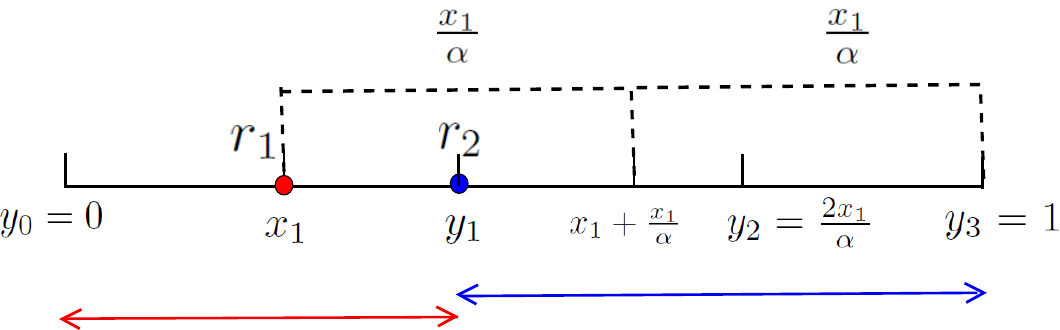}
\end{center}
\caption{The red arrow determines the patrolling area of $r_1$ in $\mathcal{A}$ and the blue arrow determines the patrolling area of $r_2$ in $\mathcal{A}$.}
\label{fig:algo-d-tight}
\end{figure}

Let $\mathcal{A}$ be a solution as follows. 
\begin{enumerate}[(1)]
\item At the beginning, the robot $r_1$ locates at $x_1$ and the robot $r_2$ locates at $y_1$, and both robots move to the right. 
\item The patrolling segment of $r_1$ is $[y_0,y_1]$ and the patrolling segment of $r_2$ is $[y_1,y_3]$. 
\item The robot $r_1$ moves back and forth inside the interval $[y_0,y_1]$ and each time $r_1$ visits $y_1$ stays at $y_1$ until $r_2$ is at point $y_1+\frac{x_1}{\alpha}$ and is moving to the left. Then $r_1$ moves to the left, visits 0 and returns to $y_1$.
\item The robot $r_2$ moves back and forth inside the interval $[y_1,y_3]$ and each time it visits $y_1$ stays at $y_1$ for $2x_1$. Then $r_2$ moves to the right, visits $y_2$ and 1 and returns to $y_1$.
\end{enumerate}
First we analyze the movement of the robots in $\mathcal{A}$. The robot $r_1$ leaves $y_1$ when the distance between $r_1$ and $r_2$ is $\frac{x_1}{\alpha}$. Note that this is possible since the length of $[y_1,y_3]$ is greater than $\frac{x_1}{\alpha}$. The next visit of $r_1$ from $y_1$ occurs after
$$\frac{x_1}{2\alpha}+2x_1+\frac{x_1}{2\alpha}=\frac{x_1}{\alpha}+2x_1,$$
which is the time that $r_1$ spends to visit 0 and return to $y_1$. During the time $\frac{x_1}{\alpha}+2x_1$ the robots $r_1$ traverse from $y_1+\frac{x_1}{\alpha}$ to $y_1$ and stays there for $2x_1$. Therefore by the time $r_1$ arrives at $y_1$, the robot $r_2$ leaves $y_1$. We now compute the waiting time of $\mathcal{A}$.

The point $y_0$ is only visited by $r_1$. So the waiting time of 0 is equal \lq\lq two times the length of the interval $[0,y_1]$ plus the time $r_1$ stops at $y_1$''. The stop time of $r_1$ at $y_1$ is equal the time $r_2$ traverse from $y_1$ to 1 and return to the point $y_1+\frac{x_1}{\alpha}$ which is $\frac{2x_1}{\alpha}$. Therefore 
$$w_{\mathcal{A}}(y_0)=\frac{2x_1}{\alpha}+2x_1+\frac{x_1}{\alpha}=x_1(2+\frac{3}{\alpha})=I(x).$$
The point $y_1$ is visited by both $r_1, r_2$. First suppose that $r_1$ is waiting at point $y_1$. The robot $r_1$ leaves $y_1$ when  $r_2$ is at distance $\frac{x_1}{\alpha}$ of $y_1$. So the next visit of $y_1$ occurs after $\frac{x_1}{\alpha}$ time by $r_2$. Also, as we discussed above by the time $r_2$ is ready to leave $y_1$ the robot $r_1$ arrives at $y_1$. So in this case there is no time interval between the visits of $r_2$ and $r_1$. Therefore
$$w_{\mathcal{A}}(y_1)=\frac{x_1}{\alpha}=I(x).$$
The point $y_2$ is only visited by $r_2$. If $r_2$ is moving to the left when it visits $y_2$ then the next visit of $y_2$ occurs in $\frac{2x_1}{\alpha}$. This is the time $r_2$ spends to visit $y_1$ and stays at it for $2x_1$ and returns to $y_2$. If $r_2$ is moving to the right when it visits $y_2$ then clearly the next visit of $y_2$ occurs in $2(1-\frac{2x_1}{\alpha})=2x_1$. Therefore
$$w_{\mathcal{A}}(y_2)=\max\{2x_1, 2\frac{x_1}{\alpha}\}=\frac{2x_1}{\alpha}=I(x).$$
For $y_3=1$ it is easy to see that 
$$w_{\mathcal{A}}(y_3)=2(1-y_1)+2x_1=2(1-x_1-\frac{x_1}{2\alpha})+2x_1=2-\frac{x_1}{\alpha}=x_1(2+\frac{3}{\alpha}).$$
The last equation follows from the fact that $x_1+\frac{2x_1}{\alpha}=1.$ Therefore $\mathcal{A}$ is a feasible solution.

Now consider Algorithm \ref{alg:sqrt(3)}. For the above example we have $d=\frac{x_1}{1+\alpha}$, $x_2=x_3=y_1$, and $x_4=x_1+\frac{x_1}{\alpha}<\frac{2x_1}{\alpha}$. Therefore by Lemma \ref{lem:waiting-time} 
$$w(y_2)= 2\max\{x_1, \frac{2x_1}{\alpha}-\frac{x_1}{1+\alpha}\}=\frac{2x_1}{\alpha}-\frac{x_1}{1+\alpha}.$$
Therefore 
$$\frac{w(y_2)}{I(y_2)}=\frac{\frac{2x_1}{\alpha}-\frac{x_1}{1+\alpha}}{\frac{x_1}{\alpha}}=1+\frac{x_1(\frac{1}{1+\alpha})}{\frac{x_1}{\alpha}}=\frac{2+\alpha}{1+\alpha}.$$

\vspace{.3cm}
\noindent{\bf Case 2. $\alpha\geq 1$:} Choose $x_1\in (0,1)$ such that $2x_1+\frac{x_1}{\alpha}=1$ and let $0<\epsilon<x_1$. Consider four points $y_0=0$, $y_1=x_1-\epsilon$, $y_2=x_1+\frac{x_1}{2\alpha}$, and $y_3=1$. See Figure \ref{fig:algo-d-tight-2}. Let $I(y_0)=I(y_3)=2(1-x_1)$, $I(y_1)=2(x_1+\epsilon)$, and $I(y_2)=\frac{x_1}{\alpha}$.

\begin{figure}[!htb]
\begin{center}
\includegraphics[width=8cm]{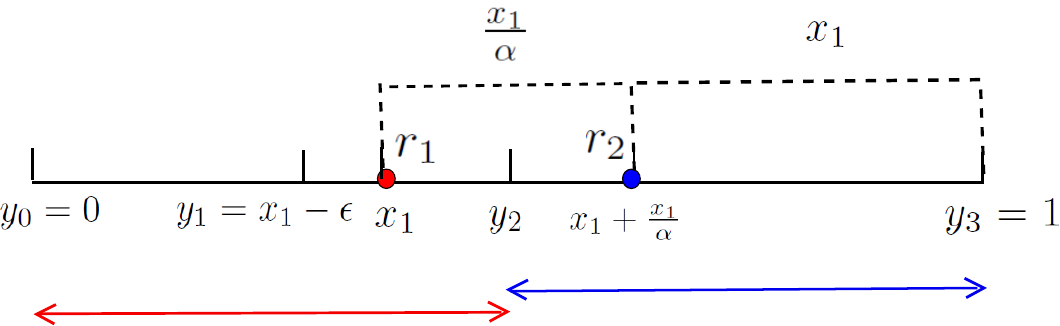}
\end{center}
\caption{The red arrow determines the patrolling area of $r_1$ in $\mathcal{A}$ and the blue arrow determines the patrolling area of $r_2$ in $\mathcal{A}$.}
\label{fig:algo-d-tight-2}
\end{figure}
Let $\mathcal{A}$ be a solution as follows. 
\begin{enumerate}[(1)]
\item At the beginning, the robot $r_1$ locates at $x_1$ and the robot $r_2$ locates at $x_1+\frac{x_1}{\alpha}$, and both robots move to the right. 
\item The patrolling segment of $r_1$ is $[y_0,y_2]$ and the patrolling segment of $r_2$ is $[y_2,y_3]$. 
\item The robot $r_1$ moves back and forth inside the interval $[y_0,y_2]$ and each time $r_1$ visits $y_1$ stays at $y_1$ for $\frac{x_1}{\alpha}$. Then $r_1$ moves to the left, visits $y_1$ and 0 and returns to $y_1$.
\item The robot $r_2$ moves back and forth inside the interval $[y_2,y_3]$ and each time it visits $y_1$ stays at $y_1$ for $\frac{x_1}{\alpha}$. Then $r_2$ moves to the right, visits 1 and returns to $y_1$.
\end{enumerate}
First we analyze the movement of the robots in $\mathcal{A}$. The robot $r_1$ leaves $y_1$ when the distance between $r_1$ and $r_2$ is $\frac{x_1}{\alpha}$, and similarly the robot $r_2$ leaves $y_1$ when the distance between $r_1$ and $r_2$ is $\frac{x_1}{\alpha}$. This is possible since the length of the intervals $[0,y_2]$ and $[y_2,1]$ is $x_1+\frac{x_1}{2\alpha}$ and $x_1\geq \frac{x_1}{\alpha}$. 
It is easy to see that the 
$$w_{\mathcal{A}}(y_0)=w_{\mathcal{A}}(y_3)=2(1-x_1).$$
The point $y_1=x_1-\epsilon$ is only visited by $r_1$ and so
\begin{align*}
w(y_1)&=2\max\{y_1, y_2-y_1+\frac{x_1}{2\alpha}\}\\
&=2\max\{x_1-\epsilon, \frac{x_1}{\alpha}+\epsilon\}\leq 2(x_1+\epsilon).
\end{align*}
In $\mathcal{A}$ the point $y_2=x_1+\frac{x_1}{2\alpha}$ is patrolled by both $r_1$ and $r_2$ in such a way that one robot stays at $y_2$ until the other robot is moving towards $y_2$ and is at distance $\frac{x_1}{\alpha}$ from $y_2$. This implies that 
$w_{\mathcal{A}}(y_2)\leq\frac{x_1}{\alpha}=I(x)$.

From the above discussion we have that $\mathcal{A}$ is a feasible solution.
Now consider Algorithm \ref{alg:sqrt(3)}. For the above example we have $d=\frac{x_1}{\alpha(1+\alpha)}$, $x_2=x_3=y_2$. Since $y_1=x_1-\epsilon<x_1$ by Lemma \ref{lem:waiting-time} we have that 
\begin{align*}
w(y_1)&= 2\max\{x_1-\epsilon, 1-(x_1-\epsilon)-\frac{x_1}{\alpha(1+\alpha)}\}\\
&=2\max\{x_1-\epsilon, x_1+\epsilon+x_1(\frac{1}{1+\alpha})\}=x_1+\epsilon+x_1(\frac{1}{1+\alpha}).
\end{align*}
In the above equation we use the facts that $2x_1+\frac{x_1}{\alpha}=1$ and $\alpha\geq 1$.
Therefore 
\begin{align*}
\frac{w(y_1)}{I(y_1)}&=\frac{x_1+\epsilon+x_1(\frac{1}{1+\alpha})}{x_1+\epsilon}\\
&=1+\frac{1}{1+\alpha}\frac{x_1}{x_1+\epsilon}.
\end{align*}
This implies that as $\epsilon$ converges to 0 the feasibility ratio of Algorithm \ref{alg:sqrt(3)} converges to $1+\frac{1}{1+\alpha}=\frac{2+\alpha}{1+\alpha}$.

\section{Omitted Proofs of Section~\ref{sec: (2+a)/(1+a)-approx} }
\label{sec: tedious cal for feasibility ratio}

Using that $d=\frac{\min\{x_1,x_4-x_1\}}{1+a}$, and the fact that the given instance is $\alpha$-expanding, 
i.e. that $x_1=\alpha(x_4-x_1)$, we prove that $w(x)/I(x)$, as it reads in~\eqref{equa: feas ratio}, is upper bounded by $\frac{2+\alpha}{1+\alpha}$. Naturally, we distinguish cases for the location of $x$ with respect to the three subintervals $[1,x_1), [x_1,x_4], (x_4, 1-x_4]$. For the sake of exposition, we split the proof in 3 corresponding Lemmata.

\begin{lemma}
For all $0\leq x <x_1$ we have $\frac{w(x)}{I(x)}\leq \frac{2+\alpha}{1+\alpha}$.
\end{lemma}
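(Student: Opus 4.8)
The plan is to read the bound on $w(x)/I(x)$ for $x\in[0,x_1)$ straight off~\eqref{equa: feas ratio}, namely $\frac{2\max\{x,1-x-d\}}{\max\{2x,\,2(1-x-x_4+x_1),\,x_4-x_1\}}$, and to show it never exceeds $\frac{2+\alpha}{1+\alpha}$. Since $d=\frac{\min\{x_1,x_4-x_1\}}{1+\alpha}$ depends on which of $x_1,x_4-x_1$ is smaller, and since $x_1=\alpha(x_4-x_1)$ by Definition~\ref{def: expanding}, I would first split into the regimes $\alpha\le1$ (so $d=\frac{x_1}{1+\alpha}$) and $\alpha\ge1$ (so $d=\frac{x_4-x_1}{1+\alpha}$), handling each by the same scheme.

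The numerator is governed by $\max\{x,1-x-d\}$, so I would case on which term attains it. If $x\ge 1-x-d$, then $w(x)=2x$, and because the denominator contains the term $2x$ supplied by Lemma~\ref{lem:idle-time}, the ratio is at most $1\le\frac{2+\alpha}{1+\alpha}$; this case is immediate. The substantive case is $w(x)=2(1-x-d)$. There I would keep only the denominator term $2(1-x-x_4+x_1)$ and reduce the target inequality $\frac{2(1-x-d)}{2(1-x-x_4+x_1)}\le\frac{2+\alpha}{1+\alpha}$ to the linear condition $(1-x)\ge(2+\alpha)(x_4-x_1)-(1+\alpha)d$. Substituting $d$ and using $(1+\alpha)d=x_1=\alpha(x_4-x_1)$ collapses the right-hand side: for $\alpha\le1$ it becomes $2(x_4-x_1)$, so the requirement is $x\le 1-2(x_4-x_1)$; for $\alpha\ge1$ it becomes $(1+\alpha)(x_4-x_1)=x_4$, so the requirement is $x\le 1-x_4$.

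For $\alpha\ge1$ this is immediate, since every $x\in[0,x_1)$ obeys $x<x_1\le 1-x_4$ by Lemma~\ref{lem:nec2} and the normalization $x_1\le1-x_4$. For $\alpha\le1$ the condition $x\le1-2(x_4-x_1)$ holds throughout $[0,x_1)$ exactly when $x_1\le1-2(x_4-x_1)$, and as $x\uparrow x_1$ this degenerates into the critical configuration that the construction in Appendix~\ref{sec: tight Algo2} saturates. I expect the neighbourhood of $x=x_1$ to be the main obstacle: there the single term $2(1-x-x_4+x_1)$ need not dominate the denominator, so I would fall back on the remaining terms $2x$ and $x_4-x_1$ and glue the resulting sub-ranges together, invoking the expansion identity $x_1=\alpha(x_4-x_1)$, the normalization $x_1\le1-x_4$, and the inequalities of Lemma~\ref{lem:nec2}(4) (in particular $x_3\ge\frac{x_1+x_4}{2}$, which feeds the $x_4-x_1$ lower bound on $I(x)$) to dispose of the borderline points. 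This gluing across the three denominator regimes, rather than any one estimate, is where the promised ``tedious but purely algebraic'' work concentrates, and it is the step I would treat most carefully.
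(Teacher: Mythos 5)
You follow the paper's own route --- start from~\eqref{equa: feas ratio}, kill the branch $w(x)=2x$ against the denominator term $2x$, split on $\alpha\le 1$ versus $\alpha\ge 1$ to resolve $d$, then fight the branch $w(x)=2(1-x-d)$ --- and your $\alpha\ge 1$ case is complete and actually cleaner than the paper's: reducing to $1-x\ge(2+\alpha)(x_4-x_1)-(1+\alpha)d=x_4$ and quoting $x<x_1\le 1-x_4$ finishes it in one line, where the paper runs two further subcases and needs $x_1\le\frac{\alpha}{2\alpha+1}$. The genuine gap is the case $\alpha\le 1$, which you leave as ``gluing'' over the remaining denominator terms. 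Your good region $x\le 1-2(x_4-x_1)$ is \emph{empty} whenever $x_4-x_1>\frac 12$, which is consistent with every constraint in play as soon as $\alpha<\frac 12$ (the normalization $x_1\le 1-x_4$ together with $x_1=\alpha(x_4-x_1)$ only forces $(1+2\alpha)(x_4-x_1)\le 1$); then all of $[0,x_1)$ lands in the case you postponed, and there the inequality you would need is false. Concretely, for $x_1=0.01$, $x_4=0.98$ (so $\alpha=\frac 1{97}$ and $d=\frac{x_1}{1+\alpha}=\frac{0.97}{98}$), at $x=0$ the full strength of~\eqref{equa: feas ratio} gives only
\[
\frac{2(1-d)}{\max\left\{0,\;2(1-x_4+x_1),\;x_4-x_1\right\}}=\frac{2(1-d)}{0.97}\approx 2.04\;>\;\frac{2+\alpha}{1+\alpha}=\frac{195}{98}\approx 1.99,
\]
so no selection or gluing of the three terms supplied by Lemma~\ref{lem:idle-time} can close this case.

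You should also know that the paper's own proof stumbles at exactly this point, so matching it is not the standard to aim for. Its opening display replaces~\eqref{equa: feas ratio} by $\frac{w(x)}{I(x)}\le\frac{\max\{x,1-x-d\}}{\max\{x,\,1-x-x_4+x_1,\,x_4-x_1\}}$, i.e.\ it halves the numerator and the first two denominator terms but not the third; this tacitly assumes $I(x)\ge 2(x_4-x_1)$, which Lemma~\ref{lem:idle-time} does not provide, and the substitution is harmless only when $\max\{x,1-x-x_4+x_1\}\ge x_4-x_1$, precisely the situation in which the third term is irrelevant. In fact the statement itself, not merely its proof, fails: the instance with points $0,\;0.495,\;1$ and idleness $I(0)=0.99$, $I(0.495)=0.97$, $I(1)=100$ is feasible (let $r_1$ zigzag on $[0,0.495]$ with period $0.99$, and let $r_2$ visit $0.495$ once inside each gap of $r_1$, alternately $0.02$ and $0.97$ after $r_1$'s visit, running to $1$ and back --- a round trip of $1.01$ --- inside each resulting interval of length $1.94$ between its own visits); it has $S_{00}=\{0.495\}$, $[x_1,x_4]=[0.01,0.98]$, $\alpha=\frac1{97}$, yet Lemma~\ref{lem:waiting-time} gives $w(0)=2(1-d)\approx 1.9802$ against $I(0)=0.99$, a ratio $\approx 2.0002>\frac{195}{98}$. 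So your instinct about where the difficulty concentrates is exactly right, but what lives there is not tedious algebra: it is a counterexample, and completing the lemma would require changing Algorithm~\ref{alg:sqrt(3)} (e.g.\ the choice of $d$) or weakening the claimed ratio and hence Lemma~\ref{lem: (2+a)/(1+a)-approx}, not a finer case analysis.
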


\begin{proof}
Using~\eqref{equa: feas ratio} we have that
\begin{equation*}
\frac{w(x)}{I(x)}\leq \frac{\max\{x,1-x-d\}}{\max\{x,1-x-x_4+x_1,x_4-x_1\}}.
\end{equation*}
If $x>\frac{1-d}{2}$ then $\max\{x,1-x-d\}=x$, and so $\frac{w(x)}{I(x)}\leq 1$. So assume that $x\leq\frac{1-d}{2}$. We consider two cases:

\vspace{.3cm}
\noindent {\bf Case 1: $\alpha\leq 1$.} Then $x_1=\alpha(x_4-x_1)\leq x_4-x_1$, and so $d=\frac{1}{1+\alpha} x_1$.
\begin{equation*}
\frac{w(x)}{I(x)}\leq \frac{1-x-\frac{x_1}{1+\alpha}}{\max\{x,1-x-\frac{x_1}{\alpha},\frac{x_1}{\alpha}\}}\leq \frac{1-x-\frac{x_1}{1+\alpha}}{\max\{1-x-\frac{x_1}{\alpha},\frac{x_1}{\alpha}\}}.
\end{equation*} 
First let $\max\{1-x-\frac{x_1}{\alpha},\frac{x_1}{\alpha}\}=\frac{x_1}{\alpha}$. Then $\frac{x_1}{\alpha}\geq 1-x-\frac{x_1}{\alpha}$, which implies that $x\geq 1-\frac{2x_1}{\alpha}$. Therefore
\begin{align*}
\frac{w(x)}{I(x)} 
&\leq \frac{1-x-\frac{x_1}{1+\alpha}}{\frac{x_1}{\alpha}}\leq \frac{1-(1-\frac{2x_1}{\alpha})-\frac{x_1}{1+\alpha}}{\frac{x_1}{\alpha}}\\
&= \frac{\frac{2x_1}{\alpha}-\frac{x_1}{1+\alpha}}{\frac{x_1}{\alpha}}= 1+\frac{x_1(\frac{1}{\alpha}-\frac{1}{1+\alpha})}{\frac{x_1}{\alpha}}\\
&= 1+\frac{1}{1+\alpha}=\frac{2+\alpha}{1+\alpha}.
\end{align*}
Now let $\max\{1-x-\frac{x_1}{\alpha},\frac{x_1}{\alpha}\}=1-x-\frac{x_1}{\alpha}$. Then $\frac{x_1}{\alpha}\leq 1-x-\frac{x_1}{\alpha}$, which implies that $x\leq 1-\frac{2x_1}{\alpha}$. 
\begin{align*}
\frac{w(x)}{I(x)} 
&\leq \frac{1-x-\frac{x_1}{1+\alpha}}{1-x-\frac{x_1}{\alpha}}= \frac{1-x-\frac{x_1}{\alpha}+\frac{x_1}{\alpha}-\frac{x_1}{1+\alpha}}{1-x-\frac{x_1}{\alpha}}\\
&= 1+\frac{\frac{x_1}{\alpha(1+\alpha)}}{1-x-\frac{x_1}{\alpha}}\leq 1+ \frac{\frac{x_1}{\alpha(1+\alpha)}}{1-(1-\frac{2x_1}{\alpha})-\frac{x_1}{\alpha}}\\
&= 1+\frac{1}{1+\alpha}=\frac{2+\alpha}{1+\alpha}.
\end{align*}

\noindent{\bf Case 2: $\alpha\geq 1$.} Then $x_1=\alpha(x_4-x_1)\geq x_4-x_1$, and $d=\frac{1}{1+\alpha} (x_4-x_1)=\frac{1}{\alpha(1+\alpha)}x_1$.
\begin{equation*}
\frac{w(x)}{I(x)}\leq \frac{1-x-\frac{x_1}{\alpha(1+\alpha)}}{\max\{x,1-x-\frac{x_1}{\alpha},\frac{x_1}{\alpha}\}}\leq \frac{1-x-\frac{x_1}{\alpha(1+\alpha)}}{\max\{x, 1-x-\frac{x_1}{\alpha}\}}.
\end{equation*} 
First let $\max\{x, 1-x-\frac{x_1}{\alpha}\}=x$. Then $x\geq 1-x-\frac{x_1}{\alpha}$, which implies that $x\geq \frac 12-\frac{x_1}{2\alpha}$. 
\begin{align*}
\frac{w(x)}{I(x)} 
&\leq \frac{1-x-\frac{ x_1}{\alpha(1+\alpha)}}{x}\leq \frac{1-(\frac 12-\frac{x_1}{2\alpha})-\frac{ x_1}{\alpha(1+\alpha)}}{\frac 12-\frac{x_1}{2\alpha}}\\
&= \frac{\frac 12-\frac{x_1}{2\alpha}+(\frac{x_1}{\alpha}-\frac{x_1}{\alpha (1+\alpha)})}{\frac 12-\frac{x_1}{2\alpha}}= 1+\frac{x_1(\frac{1}{1+\alpha})}{\frac 12-\frac{x_1}{2\alpha}}.
\end{align*}
By assumption we know that $x_1\leq 1-x_4$, and $x_4-x_1=\frac{x_1}{\alpha}$. Moreover $x_1+(x_4-x_1)+(1-x_4)=1$. Therefore $x_1+\frac{x_1}{\alpha}+x_1\leq 1$, which implies that $x_1\leq \frac{\alpha}{2\alpha+1}$. Therefore 
\begin{align*}
1+\frac{x_1(\frac{1}{1+\alpha})}{\frac 12-\frac{x_1}{2\alpha}}&\leq 1+\frac{\frac{\alpha}{2\alpha+1}(\frac{1}{1+\alpha})}{\frac 12-\frac{1}{2(2\alpha+1)}}=1+\frac{\frac{\alpha}{2\alpha+1}(\frac{1}{1+\alpha})}{\frac{\alpha}{2\alpha+1}}\\
& = 1+\frac{1}{1+\alpha}=\frac{2+\alpha}{1+\alpha}. 
\end{align*}
Now let $\max\{x,1-x-\frac{x_1}{\alpha}\}=1-x-\frac{x_1}{\alpha}$. Then $x\leq 1-x-\frac{x_1}{\alpha}$, which implies that $x\geq 1-\frac{2x_1}{\alpha}$. In the following calculations we use the fact that $x\leq x_1$.
\begin{align*}
\frac{w(x)}{I(x)} 
&\leq\frac{1-x-\frac{ x_1}{\alpha(1+\alpha)}}{1-x-\frac{x_1}{\alpha}}= \frac{1-x-\frac{x_1}{\alpha}+\frac{x_1}{\alpha}-\frac{x_1}{\alpha (1+\alpha)}}{1-x-\frac{x_1}{\alpha}}\\
&= 1+\frac{x_1(\frac{1}{\alpha}-\frac{1}{\alpha(1+\alpha)})}{1-x-\frac{x_1}{\alpha}}\leq 1+\frac{x_1(\frac{1}{\alpha}-\frac{1}{\alpha(1+\alpha)})}{1-x_1-\frac{x_1}{\alpha}}\\
&\leq 1+\frac{\frac{\alpha}{2\alpha+1}(\frac{1}{1+\alpha})}{1-\frac{\alpha}{2\alpha+1}(1+\frac{1}{\alpha})}
= 1+\frac{\frac{\alpha}{2\alpha+1}(\frac{1}{1+\alpha})}{1-\frac{1+\alpha}{2\alpha+1}}\\
&= 1+\frac{\frac{\alpha}{2\alpha+1}(\frac{1}{1+\alpha})}{\frac{\alpha}{2\alpha+1}}
=1+\frac{1}{1+\alpha}=\frac{2+\alpha}{1+\alpha}.
\end{align*}
This proves that for all $x\in [0,x_1)$ the ratio $\frac{w(x)}{I(x)}$ is at most $\frac{2+\alpha}{1+\alpha}$.
\qed \end{proof}

\begin{lemma}
For all $x_1\leq x \leq x_4$ we have $\frac{w(x)}{I(x)}\leq \frac{2+\alpha}{1+\alpha}$.
\end{lemma}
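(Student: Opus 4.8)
The plan is to start directly from the middle case of the feasibility bound~\eqref{equa: feas ratio}, namely
$$\frac{w(x)}{I(x)}\leq \frac{2\max\{x-x_1,x_4-x\}+d}{2\max\{x-x_1,x_4-x\}}.$$
Writing $M:=\max\{x-x_1,x_4-x\}$, the right-hand side is exactly $1+\frac{d}{2M}$, so the whole lemma collapses to the single scalar inequality $\frac{d}{2M}\leq\frac{1}{1+\alpha}$, that is, $M\geq \frac{(1+\alpha)d}{2}$. This reformulation is the crux: once it is in place, everything else is an elementary estimate.

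Next I would substitute the defining formula $d=\frac{1}{1+\alpha}\min\{x_1,x_4-x_1\}$ from Algorithm~\ref{alg:sqrt(3)}, which simplifies the target to $M\geq \frac{1}{2}\min\{x_1,x_4-x_1\}$. A pleasant feature here, in contrast with the preceding lemma, is that the $\alpha$-expanding hypothesis $x_1=\alpha(x_4-x_1)$ is \emph{not} needed; only the explicit value of $d$ enters, and the factor $(1+\alpha)$ cancels cleanly.

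To finish, I would lower-bound $M$ uniformly over the interval $[x_1,x_4]$. The function $x\mapsto\max\{x-x_1,x_4-x\}$ is minimized at the midpoint $x=\frac{x_1+x_4}{2}$, where both arguments equal $\frac{x_4-x_1}{2}$; hence $M\geq\frac{x_4-x_1}{2}$ for every $x\in[x_1,x_4]$. Since $\min\{x_1,x_4-x_1\}\leq x_4-x_1$, this gives $M\geq\frac{x_4-x_1}{2}\geq\frac{1}{2}\min\{x_1,x_4-x_1\}$, which is precisely the required inequality, and the lemma follows.

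The only real subtlety I expect is recognizing that no case split on whether $\alpha<1$ or $\alpha\geq 1$, nor on which branch the inner maxima select, is necessary here. The worst case of the ratio occurs exactly where $M$ is smallest, and that smallest value $\frac{x_4-x_1}{2}$ already dominates $\frac{1}{2}\min\{x_1,x_4-x_1\}$ regardless of which term the minimum picks; so a single line suffices in place of the branching calculation used for the $[0,x_1)$ case.
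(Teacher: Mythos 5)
Your proof is correct and follows essentially the same route as the paper: both start from the middle case of~\eqref{equa: feas ratio}, rewrite the ratio as $1+\frac{d}{2\max\{x-x_1,\,x_4-x\}}$, lower-bound the maximum by $\frac{x_4-x_1}{2}$ (attained at the midpoint), and finish with $d\leq\frac{x_4-x_1}{1+\alpha}$. Your write-up is in fact slightly tidier, since it dispenses with the paper's redundant case split on which term achieves the maximum and avoids two compensating factor-of-$2$ slips in the paper's intermediate displays (the paper writes $1+\frac{d}{\max}$ instead of $1+\frac{d}{2\max}$, and then $x_4-\frac{x_1+x_4}{2}=x_4-x_1$ instead of $\frac{x_4-x_1}{2}$, errors which cancel to give the correct bound $1+\frac{d}{x_4-x_1}$).
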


\begin{proof}
Using~\eqref{equa: feas ratio} we have that
\begin{equation*}
\frac{w(x)}{I(x)}\leq \frac{\max\{x-x_1, x_4-x\}+\frac d2}{\max\{x-x_1, x_4-x\}}\\
= 1+ \frac{d}{\max\{x-x_1, x_4-x\}}.
\end{equation*}
If $x-x_1\leq x_4-x$ then $x\leq \frac{x_1+ x_4}{2}$ and $\max\{x-x_1, x_4-x\}=x_4-x$. Moreover $x_4-x\geq x_4-\frac{x_1+x_4}{2}=x_4-x_1$. Therefore
$$\frac{w(x)}{I(x)}\leq 1+\frac{d}{x_4-x_1}.$$
Since $d\leq \frac{1}{1+\alpha}(x_4-x_1)$ we have
$$\frac{w(x)}{I(x)}\leq 1+\frac{1}{1+\alpha}=\frac{2+\alpha}{1+\alpha}.$$
Now let $x_4-x\leq x- x_1$. Then $x\geq\frac{x_1+ x_4}{2}$ and $\max\{x-x_1, x_4-x\}=x-x_1$. Moreover, $x-x_1\geq \frac{x_1+x_4}{2}-x_1=x_4-x_1$, and thus
$$\frac{w(x)}{I(x)}\leq 1+\frac{d}{x_4-x_1}\leq 1+\frac{1}{1+\alpha}=\frac{2+\alpha}{1+\alpha}.$$
\qed \end{proof}

\begin{lemma}
For all $x_4< x \leq 1$ we have $\frac{w(x)}{I(x)}\leq \frac{2+\alpha}{1+\alpha}$.
\end{lemma}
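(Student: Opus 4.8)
The plan is to reduce this case to the already-established bound for $x\in[0,x_1)$ (the first of the three Lemmata in this appendix) via the reflection $u=1-x$. Starting from the third branch of~\eqref{equa: feas ratio}, namely
\[
\frac{w(x)}{I(x)}\le \frac{2\max\{1-x,\,x-d\}}{\max\{2(1-x),\,2(x-x_4+x_1),\,x_4-x_1\}},
\]
I would substitute $u=1-x$ and observe that the right-hand side becomes \emph{formally identical} to the expression analyzed for $x\in[0,x_1)$, with $u$ in the role of $x$: the numerator turns into $2\max\{u,\,1-u-d\}$, and, using $1-x-x_4+x_1=1-u-(x_4-x_1)$, the denominator turns into $\max\{2u,\,2(1-u-(x_4-x_1)),\,x_4-x_1\}$. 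The only difference is the range of the variable: whereas $x$ ranged over $[0,x_1)$ on the left, here $u$ ranges over $[0,1-x_4)$.

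I would then follow the same case analysis. If $u>\tfrac{1-d}{2}$ the numerator maximum is $2u$ and the ratio is at most $1$, so I may assume $u\le\tfrac{1-d}{2}$, i.e. the numerator equals $2(1-u-d)$. Next I split on $\alpha\le 1$ and $\alpha\ge 1$, using $x_4-x_1=x_1/\alpha$ and the corresponding value of $d$ exactly as before. For $\alpha\le 1$ (where $d=\tfrac{x_1}{1+\alpha}$), and for the first subcase of $\alpha\ge 1$ (where $d=\tfrac{x_1}{\alpha(1+\alpha)}$ and the denominator maximum is $u$), the computations are verbatim copies of the first Lemma, since those steps invoke only the instance-level inequality $x_1\le\tfrac{\alpha}{2\alpha+1}$ — which follows from the standing assumption $x_1\le 1-x_4$ together with $x_4-x_1=x_1/\alpha$ — and never the position of the individual point.

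The one genuinely different step, and the only place where I expect to have to think, is the second subcase of $\alpha\ge 1$, where the denominator maximum is $1-u-\tfrac{x_1}{\alpha}$ and the ratio reduces to
\[
1+\frac{x_1\!\left(\tfrac{1}{\alpha}-\tfrac{1}{\alpha(1+\alpha)}\right)}{1-u-\tfrac{x_1}{\alpha}}.
\]
On the left this was controlled using $x\le x_1$; here I would instead use the range bound $u<1-x_4=1-x_1-\tfrac{x_1}{\alpha}$, which yields $1-u-\tfrac{x_1}{\alpha}>x_1$ and hence bounds the fraction above by $1+\bigl(\tfrac{1}{\alpha}-\tfrac{1}{\alpha(1+\alpha)}\bigr)=\tfrac{2+\alpha}{1+\alpha}$. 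Thus the right-hand constraint $x>x_4$ plays exactly the symmetric role that $x<x_1$ played on the left. Assembling all cases gives $\tfrac{w(x)}{I(x)}\le\tfrac{2+\alpha}{1+\alpha}$ for every $x\in(x_4,1]$, completing the proof.
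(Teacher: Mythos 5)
Your proposal is correct and matches the paper's proof in substance: the paper establishes this lemma by carrying out exactly the mirror-image of the computations for $x\in[0,x_1)$, and in the one subcase you single out (the $\alpha\geq 1$ case with denominator maximum $x-\frac{x_1}{\alpha}$) it likewise invokes the position constraint $x\geq x_4$ together with $x_4=\frac{1+\alpha}{\alpha}x_1$ to get $x-\frac{x_1}{\alpha}\geq x_1$, which is precisely your bound $1-u-\frac{x_1}{\alpha}>x_1$ in reflected coordinates. The only difference is presentational: you invoke the reflection $u=1-x$ to avoid rewriting the unchanged cases, whereas the paper repeats them explicitly.
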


\begin{proof}
Using~\eqref{equa: feas ratio} we have that
\begin{equation*}
\frac{w(x)}{I(x)}\leq \frac{\max\{1-x,x-d\}}{\max\{1-x,x-x_4+x_1,x_4-x_1\}}.
\end{equation*}
If $x>\frac{1-d}{2}$ then $\max\{1-x,x-d\}=1-x$, and so $\frac{w(x)}{I(x)}\leq 1$. So assume that $x\leq\frac{1-d}{2}$. We consider two cases:

\vspace{.3cm}
\noindent {\bf Case 1: $\alpha\leq 1$.} Then $x_1=\alpha(x_4-x_1)\leq x_4-x_1$, and so $d=\frac{1}{1+\alpha} x_1$.
\begin{equation*}
\frac{w(x)}{I(x)}\leq \frac{x-\frac{x_1}{1+\alpha}}{\max\{1-x,x-\frac{x_1}{\alpha},\frac{x_1}{\alpha}\}}\leq \frac{x-\frac{x_1}{1+\alpha}}{\max\{x-\frac{x_1}{\alpha},\frac{x_1}{\alpha}\}}.
\end{equation*} 
First let $\max\{x-\frac{x_1}{\alpha},\frac{x_1}{\alpha}\}=\frac{x_1}{\alpha}$. Then $\frac{x_1}{\alpha}\geq x-\frac{x_1}{\alpha}$, which implies that $x\leq\frac{2x_1}{\alpha}$. Therefore
\begin{align*}
\frac{w(x)}{I(x)} 
&\leq \frac{x-\frac{x_1}{1+\alpha}}{\frac{x_1}{\alpha}}\leq \frac{\frac{2x_1}{\alpha}-\frac{x_1}{1+\alpha}}{\frac{x_1}{\alpha}}\\
&= \frac{\frac{2x_1}{\alpha}-\frac{x_1}{1+\alpha}}{\frac{x_1}{\alpha}}= 1+\frac{x_1(\frac{1}{\alpha}-\frac{1}{1+\alpha})}{\frac{x_1}{\alpha}}\\
&= 1+\frac{1}{1+\alpha}=\frac{2+\alpha}{1+\alpha}.
\end{align*}
Now let $\max\{x-\frac{x_1}{\alpha},\frac{x_1}{\alpha}\}=x-\frac{x_1}{\alpha}$. Then $\frac{x_1}{\alpha}\leq x-\frac{x_1}{\alpha}$, which implies that $x\geq \frac{2x_1}{\alpha}$. 
\begin{align*}
\frac{w(x)}{I(x)} 
&\leq \frac{x-\frac{x_1}{1+\alpha}}{x-\frac{x_1}{\alpha}}= \frac{x-\frac{x_1}{\alpha}+\frac{x_1}{\alpha}-\frac{x_1}{1+\alpha}}{x-\frac{x_1}{\alpha}}\\
&= 1+\frac{\frac{x_1}{\alpha(1+\alpha)}}{x-\frac{x_1}{\alpha}}\leq 1+ \frac{\frac{x_1}{\alpha(1+\alpha)}}{\frac{2x_1}{\alpha}-\frac{x_1}{\alpha}}\\
&= 1+\frac{1}{1+\alpha}=\frac{2+\alpha}{1+\alpha}.
\end{align*}

\noindent{\bf Case 2: $\alpha\geq 1$.} Then $x_1=\alpha(x_4-x_1)\geq x_4-x_1$, and $d=\frac{1}{1+\alpha} (x_4-x_1)=\frac{1}{\alpha(1+\alpha)}x_1$.
\begin{equation*}
\frac{w(x)}{I(x)}\leq \frac{x-\frac{x_1}{\alpha(1+\alpha)}}{\max\{1-x,x-\frac{x_1}{\alpha},\frac{x_1}{\alpha}\}}\leq \frac{x-\frac{x_1}{\alpha(1+\alpha)}}{\max\{1-x, x-\frac{x_1}{\alpha}\}}.
\end{equation*} 
First let $\max\{1-x, x-\frac{x_1}{\alpha}\}=1-x$. Then $1-x\geq x-\frac{x_1}{\alpha}$, which implies that $x\leq \frac 12+\frac{x_1}{2\alpha}$. 
\begin{align*}
\frac{w(x)}{I(x)} 
&\leq \frac{x-\frac{x_1}{\alpha(1+\alpha)}}{1-x}\leq \frac{\frac 12+\frac{x_1}{2\alpha}-\frac{ x_1}{\alpha(1+\alpha)}}{\frac 12-\frac{x_1}{2\alpha}}\\
&= \frac{\frac 12-\frac{x_1}{2\alpha}+(\frac{x_1}{\alpha}-\frac{x_1}{\alpha (1+\alpha)})}{\frac 12-\frac{x_1}{2\alpha}}= 1+\frac{x_1(\frac{1}{1+\alpha})}{\frac 12-\frac{x_1}{2\alpha}}.
\end{align*}
By assumption we know that $x_1\leq 1-x_4$, and $x_4-x_1=\frac{x_1}{\alpha}$. Moreover $x_1+(x_4-x_1)+(1-x_4)=1$. Therefore $x_1+\frac{x_1}{\alpha}+x_1\leq 1$, which implies that $x_1\leq \frac{\alpha}{2\alpha+1}$. Therefore 
\begin{align*}
1+\frac{x_1(\frac{1}{1+\alpha})}{\frac 12-\frac{x_1}{2\alpha}}&\leq 1+\frac{\frac{\alpha}{2\alpha+1}(\frac{1}{1+\alpha})}{\frac 12-\frac{1}{2(2\alpha+1)}}=1+\frac{\frac{\alpha}{2\alpha+1}(\frac{1}{1+\alpha})}{\frac{\alpha}{2\alpha+1}}\\
& = 1+\frac{1}{1+\alpha}=\frac{2+\alpha}{1+\alpha}. 
\end{align*}
Now let $\max\{1-x,x-\frac{x_1}{\alpha}\}=x-\frac{x_1}{\alpha}$. In the calculations we use the fact that $x\geq x_4$ and $x_4=\frac{1+\alpha}{\alpha}x_1$ (recall that $x_1=\alpha (x_4-x_1)$).
\begin{align*}
\frac{w(x)}{I(x)} 
&\leq\frac{x-\frac{ x_1}{\alpha(1+\alpha)}}{x-\frac{x_1}{\alpha}}= \frac{x-\frac{x_1}{\alpha}+\frac{x_1}{\alpha}-\frac{x_1}{\alpha (1+\alpha)}}{x-\frac{x_1}{\alpha}}\\
&= 1+\frac{x_1(\frac{1}{\alpha}-\frac{1}{\alpha(1+\alpha)})}{x-\frac{x_1}{\alpha}}\leq 1+\frac{x_1(\frac{1}{1+\alpha})}{x_4-\frac{x_1}{\alpha}}\\
&= 1+\frac{x_1(\frac{1}{1+\alpha})}{\frac{1+\alpha}{\alpha}x_1-\frac{1}{\alpha}x_1}
=1+\frac{1}{1+\alpha}=\frac{2+\alpha}{1+\alpha}.
\end{align*}
This proves that for all $x\in (x_4,1]$ the ratio $\frac{w(x)}{I(x)}$ is at most $\frac{2+\alpha}{1+\alpha}$. 
\qed \end{proof}


\begin{thebibliography}{99}

\bibitem{AKG15} S~Alshamrani, D.R.~Kowalski, L.~G\k asieniec,
How Reduce Max Algorithm Behaves with Symptoms Appearance on Virtual Machines in Clouds,
In Proc. {\sl IEEE International Conference}
%on Computer and Information Technology; Ubiquitous Computing and Communications;
%Dependable, Autonomic and Secure Computing; Pervasive Intelligence and Computing},
CIT/IUCC/DASC/PICOM, 2015, pp 1703-1710.

\bibitem{baruah_proportionate_1996}
S.~K. Baruah, N.~K. Cohen, C.~G. Plaxton, and D.~A. Varvel.
\newblock Proportionate progress: {A} notion of fairness in resource
  allocation.
\newblock {\em Algorithmica}, 15(6):600--625, June 1996.

\bibitem{baruah_pfair_1998}
S.~K. Baruah and S.-S. Lin.
\newblock Pfair scheduling of generalized pinwheel task systems.
\newblock {\em IEEE Transactions on Computers}, 47(7):812--816, July 1998.

\bibitem{bender-etal-2015}
M.~A.~Bender, S.~P.~Fekete, A.~Kr{\"{o}}ller, J.S.B.~Mitchell, V.~Liberatore, V.~Polishchuk, J.~Suomela.
\newblock The Minimum Backlog Problem.
\newblock {\em Theoretical Computer Science}, 605:51--61, November 2015.

\bibitem{cinderella}
M.~H.~L.~Bodlaender, C.~A.~J.~Hurkens, V.~J.~J.~Kusters, F.~Staals,
G.~J.~Woeginger, and H.~Zantema
\newblock Cinderella versus the Wicked Stepmother.
\newblock In Proc. {\em  TCS 2012}, LNCS v.~6942, pages 57--71. Springer, 2012.

\bibitem{chan_general_1992}
M.~Y. Chan and F.~Y.~L. Chin.
\newblock General schedulers for the pinwheel problem based on double-integer
  reduction.
\newblock {\em IEEE Trans. on Comp.}, 41(6):755--768, June 1992.

\bibitem{chan_schedulers_????}
M.~Y. Chan and F.~Chin.
\newblock Schedulers for larger classes of pinwheel instances.
\newblock {\em Algorithmica}, 9(5):425--462.

\bibitem{chrobak-etal-ICALP2001}
M.~Chrobak, J.~Csirik, C.~Imreh, J.~Noga, J.~Sgall, G.J. Woeginger.
\newblock The Buffer Minimization Problem for Multiprocessor Scheduling with Conflicts.
\newblock In Proc. {\em ICALP 2001}, LNCS v.~2076, pages 862--874. Springer 2001.

\bibitem{collins_optimal_2013}
A.~Collins, J.~Czyzowicz, L.~G\k asieniec, A.~Kosowski, E.~Kranakis,
D.~Krizanc, R.~Martin, and O.~Morales~Ponce.
\newblock Optimal {Patrolling} of {Fragmented} {Boundaries}.
\newblock In {\em Proceedings of the {Twenty}-fifth {Annual} {ACM} {Symposium}
  on {Parallelism} in {Algorithms} and {Architectures}}, {SPAA} '13, pages
  241--250, New York, USA, 2013.

\bibitem{czyzowicz_boundary_2011}
J.~Czyzowicz, L.~G\k asieniec, A.~Kosowski, and E.~Kranakis.
\newblock Boundary {Patrolling} by {Mobile} {Agents} with {Distinct} {Maximal}
  {Speeds}.
\newblock In Proc. {\em  ESA 2011}, number 6942 in Lecture {Notes} in {Computer}
  {Science}, pages 701--712. Springer, September 2011.
%\newblock DOI: 10.1007/978-3-642-23719-5\_59.

\bibitem{CGK+15}
J.~Czyzowicz, L.~G\k asieniec, A.~Kosowski, E.~Kranakis,
D.~Krizanc, and N.~Taleb.
\newblock When Patrolmen Become Corrupted: Monitoring a Graph using Faulty Mobile Robots
\newblock{In Proc. ISAAC 2015}, 26th International Symposium on Algorithms and Computation, pp. 343-354.

\bibitem{fishburn_pinwheel}
P.~C.~Fishburn and J,C,~Lagarias,
\newblock Pinwheel {Scheduling}: {Achievable} {Densities}.
\newblock {\em Algorithmica}, 34(1):14--38.

\bibitem{GK+16}
L.~G\k asieniec, R.~Klasing, C.~Levcopoulos, A.~Lingas, J.~Min, T.~Radzik, 
Bamboo Garden Trimming Problem 
(Perpetual Maintenance of Machines with Different Attendance Urgency Factors), 
In Proc.~SOFSEM 2017, pp.~229--240.

\bibitem{holte_pinwheel:_1989}
R.~Holte, A.~Mok, L.~Rosier, I.~Tulchinsky, and D.~Varvel.
\newblock The pinwheel: a real-time scheduling problem.
\newblock In {\em {II}: {Software} {Track}, {Proceedings} of the
  {Twenty}-{Second} {Annual} {Hawaii} {International} {Conference} on {System}
  {Sciences}, 1989. {Vol}}, volume~2, pages 693--702 vol.2, January 1989.

\bibitem{holte_pinwheel_1992}
R.~Holte, L.~Rosier, I.~Tulchinsky, and D.~Varvel.
\newblock Pinwheel scheduling with two distinct numbers.
\newblock {\em Theoretical Computer Science}, 100(1):105--135, June 1992.

\bibitem{KaKo15}
A.~Kawamura and Y.~Kobayashi,
Fence patrolling by mobile agents with distinct speeds,
Distributed Computing 28(2): 147-154, 2015.

\bibitem{LS17}
D.~Liang and H.~Shen,
Point Sweep Coverage on Path, unpublished work available at
{\tt https://arxiv.org/abs/1704.04332}.


\bibitem{lin_pinwheel_????}
S.-S.~Lin and K.-J.~Lin.
\newblock A {Pinwheel} {Scheduler} for {Three} {Distinct} {Numbers} with a
  {Tight} {Schedulability} {Bound}.
\newblock {\em Algorithmica}, 19(4):411--426.

\bibitem{ntafos_gallery_1986}
S.~Ntafos.
\newblock On gallery watchmen in grids.
\newblock {\em Information Processing Letters}, 23(2):99--102, 1986.

\bibitem{orourke_1987} J.~O'Rourke. 
\newblock Art gallery theorems and algorithms. 
\newblock Oxford University Press, Vol. 57, 1987.

\bibitem{romer_algorithm_????}
T.~H. Romer and L.~E. Rosier.
\newblock An algorithm reminiscent of Euclidean-gcd for computing a function
  related to pinwheel scheduling.
\newblock {\em Algorithmica}, 17(1):1--10,\/'97.

\bibitem{serafini_mathematical_1989}
P.~Serafini and W.~Ukovich.
\newblock A {Mathematical} {Model} for {Periodic} {Scheduling} {Problems}.
\newblock {\em SIAM Journal on Discrete Mathematics}, 2(4):550--581, November
  1989.

\bibitem{urrutia_art_2000}
J.~Urrutia.
\newblock Art gallery and illumination problems.
\newblock {\em Handbook of computational geometry}, 1(1):973--1027, 2000.

\end{thebibliography}
\end{document}